\newtheorem{Theorem}{Theorem}
\newtheorem{Lemma}{Lemma}
\newtheorem{Proposition}{Proposition}
\newtheorem{Definition}{Definition}
\newcommand{\E}{\mathbb{E}}
\newcommand{\R}{\mathbb{R}}
\renewcommand{\d}{\,\mathrm{d}}
\let\phi\varphi
\let\tilde\widetilde
\begin{document}
\title{R\'enyi Entropy Power and Normal Transport}

 \author{%
   \IEEEauthorblockN{Olivier Rioul}
   \IEEEauthorblockA{LTCI, T\'el\'ecom Paris, Institut Polytechnique de Paris, 91120, Palaiseau, France}
 }

\maketitle

\begin{abstract}
A framework for deriving Rényi entropy-power inequalities (REPIs) is presented that uses linearization and an inequality of Dembo, Cover, and Thomas. Simple arguments are given to recover the previously known Rényi EPIs and derive new ones, by unifying a multiplicative form with constant c and a modification with exponent $\alpha$ of previous works. An information-theoretic proof of the Dembo-Cover-Thomas inequality---equivalent to Young's convolutional inequality with optimal constants---is provided, based on properties of Rényi conditional and relative entropies and using transportation arguments from Gaussian densities. For log-concave densities, a transportation proof of a sharp varentropy bound is presented.
\end{abstract}

{\small
This work was partially presented at the 2019 Information Theory and Applications Workshop, San Diego, CA.}

\section{Introduction}

We consider the $r$-entropy ({R\'enyi entropy} of exponent $r$, where $r>0$ and $r\ne 1$) of a $n$-dimensional zero-mean random vector $X\in\R^n$ having density $f\in L^r(\R^n)$:\\[-1.5ex]
\begin{equation}
h_r(X)=\frac{1}{1-r}\log \int_{\R^n} f^r(x) \d x 
= -r' \log\|f\|_r \label{hpdef2}
\end{equation}
where $\|f\|_r$ denotes the $L^r$ norm of $f$, and $r'=\frac{r}{r-1}$ is the \emph{conjugate exponent} of $r$, such that $\frac{1}{r}+\frac{1}{r'}=1$.
Notice that either $r>1$ and $r'>1$, or $0<r<1$ and $r'<0$.
The limit as $r\to 1$ is the classical
$h_1(X)=h(X)=  -\int_{\R^n} f(x) \log f(x) \d x$. Letting $N(X)=\exp\bigl(2h(X)/n\bigr)$ be the corresponding entropy power~\cite{Shannon48}, 
the famous entropy power inequality (EPI)~\cite{Shannon48,Rioul11} writes
$N\Bigl( \sum_{i=1}^m X_i\Bigr) \geq   \sum_{i=1}^m N(X_i)$ for any independent random vectors $X_1,X_2,\ldots,X_m\in\R^n$.
The link with the Rényi entropy $h_r(X)$ was first made in~\cite{DemboCoverThomas91} in connection with a strengthened Young's convolutional inequality, where the EPI is obtained by letting exponents tend to $1$~\cite[Thm~17.8.3]{CoverThomas06}.

Recently, there has been increasing interest in {R\'enyi} entropy-power inequalities~\cite{MadimanMelbourneXu17}.
The Rényi entropy-power $N_r(X)$ is defined~\cite{BobkovChistyakov15} as the average power of a white Gaussian vector having the same Rényi entropy as~$X$. 
If $X^*\sim \mathcal{N}(0,\sigma^2\mathbf{I})$ is white Gaussian, an easy calculation yields
\begin{equation}\label{renyiGauss}
h_r(X^*) = \tfrac{n}{2} \log (2\pi\sigma^2) + \tfrac{n}{2} r' \tfrac{\log r}{r}. 
\end{equation}
Since equating $h_r(X^*)=h_r(X)$ gives $\sigma^2 =  \frac{e^{2h_r(X)/n}}{2\pi r^{r'/r}}$, we define $N_r(X)= e^{2h_r(X)/n}$ as the $r$-entropy power.
.

Bobkov and Chistyakov~\cite{BobkovChistyakov15} extended the classical EPI to the $r$-entropy by incorporating a $r$-dependent constant $c>0$:
\begin{equation}\label{repic}
N_r\Bigl( \sum\nolimits_{i=1}^m X_i\Bigr) \geq c  \sum\nolimits_{i=1}^m N_r(X_i).
\end{equation}
Ram and Sason~\cite{RamSason16} improved (increased) the value of $c$ by making it depend also on the number $m$ of independent vectors $X_1,X_2,\ldots,X_m$. 
Bobkov and Marsiglietti~\cite{BobkovMarsiglietti17} proved another modification of the EPI for the Rényi entropy:
\begin{equation}\label{repialpha}
{N_r^{\vphantom{2}}}^{\!\alpha}\Bigl( \sum\nolimits_{i=1}^m X_i\Bigr) \geq  \sum\nolimits_{i=1}^m {N_r^{\vphantom{2}}}^{\!\alpha}(X_i)
\end{equation}
with~a power exponent parameter $\alpha>0$. Due to the non-increasing property of the $\alpha$-norm, if~\eqref{repialpha} holds for $\alpha$ it also holds for any $\alpha'>\alpha$.
The value of $\alpha$ was further improved (decreased) by Li~\cite{Li18}. 
All the above EPIs were found for Rényi entropies of orders $r>$1. 
Recently, the $\alpha$-modification of the Rényi EPI~\eqref{repialpha} was extended to orders~$<$1 for two independent variables having log-concave densities by Marsiglietti and Melbourne~\cite{MarsigliettiMelbourne18}. 
The starting point of all the above works was Young's strengthened convolutional~inequality. 

In this paper, we build on the results of~\cite{Rioul18} to provide simple proofs for Rényi EPIs of the general form 
\begin{equation}\label{repig}
{N_r^{\vphantom{2}}}^{\!\alpha}\Bigl( \sum\nolimits_{i=1}^m X_i\Bigr) \geq  c\sum\nolimits_{i=1}^m {N_r^{\vphantom{2}}}^{\!\alpha}(X_i)
\end{equation}
with constant $c>0$ and exponent $\alpha>0$.
The present framework uses only basic properties of Rényi entropies and
 is based on a transportation argument from normal densities and a change of variable by rotation, which was previously used to give a simple proof of Shannon's original EPI~\cite{Rioul17}.

\section{Linearization}

\noindent The first step toward proving~\eqref{repig} is the following linearization lemma which generalizes~\cite[Lemma~2.1]{Li18}.
\begin{Lemma}\label{charact}
For independent $X_1,X_2,\ldots,X_m$, the Rényi EPI in the general form~\eqref{repig} is equivalent to the following inequality
\begin{equation}\label{repi1bis}
h_r\bigl( \textstyle\sum\limits_{i=1}^m\sqrt{\lambda_i}X_i\bigr) - \sum\limits_{i=1}^m \lambda_i h_{r}(X_i)
\geq \frac{n}{2} \bigl(  \frac{\log c}{\alpha} + \bigl(\frac{1}{\alpha}-1\bigr) H(\lambda)  \bigr)
\end{equation}
for any distribution  $\lambda=(\lambda_1,\ldots,\lambda_m)$ of entropy $H(\lambda)$.
\end{Lemma}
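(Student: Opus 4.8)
The plan is to establish the equivalence between~\eqref{repig} and~\eqref{repi1bis} by passing to logarithms and invoking the Gibbs variational identity
\[
\log\sum_i e^{\beta_i}=\max_{\lambda}\Bigl(\sum_i\lambda_i\beta_i+H(\lambda)\Bigr),
\]
the maximum running over probability vectors $\lambda$ and attained at $\lambda_i\propto e^{\beta_i}$; this is just concavity of the logarithm, equivalently the non-negativity of a relative entropy. Writing $N_r^{\alpha}(X_i)=e^{(2\alpha/n)h_r(X_i)}$ and taking logarithms in~\eqref{repig}, the Rényi EPI for a fixed independent tuple $X_1,\dots,X_m$ becomes $\tfrac{2\alpha}{n}\,h_r(\sum_i X_i)\ge\log c+\log\sum_i e^{(2\alpha/n)h_r(X_i)}$, which by the identity above is \emph{exactly} equivalent to the family, indexed by all distributions $\lambda$, of linear inequalities
\[
h_r\Bigl(\textstyle\sum_i X_i\Bigr)-\sum_i\lambda_i h_r(X_i)\ \ge\ \frac{n}{2\alpha}\bigl(\log c+H(\lambda)\bigr).
\]

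Next I would absorb the scaling. Using the elementary dilation rule $h_r(aX)=h_r(X)+n\log|a|$ (equivalently $N_r(aX)=a^2N_r(X)$), obtained by the substitution $y=x/a$ in~\eqref{hpdef2}, I apply the last display --- valid for \emph{every} independent tuple --- to the dilated tuple $(\sqrt{\lambda_1}X_1,\dots,\sqrt{\lambda_m}X_m)$ with the same $\lambda$. Since $h_r(\sqrt{\lambda_i}X_i)=h_r(X_i)+\tfrac{n}{2}\log\lambda_i$ and $-\sum_i\lambda_i\log\lambda_i=H(\lambda)$, the subtracted sum $\sum_i\lambda_i h_r(\sqrt{\lambda_i}X_i)$ equals $\sum_i\lambda_i h_r(X_i)-\tfrac{n}{2}H(\lambda)$, so the inequality turns into precisely~\eqref{repi1bis}. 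Conversely, applying~\eqref{repi1bis} (assumed for all tuples and all $\lambda$) to $(X_i/\sqrt{\lambda_i})_i$ gives back the second display for an arbitrary tuple, hence~\eqref{repig}; for a fixed $\lambda$ with strictly positive entries the map $(X_i)_i\mapsto(\sqrt{\lambda_i}X_i)_i$ is a bijection on tuples, so the two ``for all tuples, for all $\lambda$'' statements coincide. The degenerate case in which some $\lambda_i=0$ deserves a word: then, with the convention $0\log 0=0$, both~\eqref{repig} restricted to $\{X_i:\lambda_i>0\}$ and~\eqref{repi1bis} for this $\lambda$ involve only those variables, so it reduces to the non-degenerate case (using that~\eqref{repig} for $m$ summands trivially implies it for fewer), or one simply lets $\lambda_i\downarrow 0$ and uses continuity.

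I do not expect a genuine obstacle: the entire content is the variational identity, and the only thing to get right is the accounting of the entropy terms --- the gap between the $\tfrac1\alpha H(\lambda)$ appearing in the linearized REPI and the $(\tfrac1\alpha-1)H(\lambda)$ in~\eqref{repi1bis} is exactly the $-\sum_i\lambda_i\log\lambda_i$ produced by the dilation $X_i\mapsto\sqrt{\lambda_i}X_i$. The constant $c$ plays no active role, merely riding along as the additive term $\log c$, so the extension from the $c=1$ statement of~\cite[Lemma~2.1]{Li18} is immediate.
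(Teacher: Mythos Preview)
Your proposal is correct and uses essentially the same ingredients as the paper's proof: the scaling rule $h_r(aX)=h_r(X)+n\log|a|$ and the Gibbs/log-sum inequality, which the paper invokes as ``concavity of the logarithm'' in one direction and by plugging in the optimal $\lambda_i={N_r^{\vphantom{2}}}^{\!\alpha}(X_i)/\sum_j{N_r^{\vphantom{2}}}^{\!\alpha}(X_j)$ in the other. Your organization is slightly cleaner in that you name the variational identity $\log\sum_i e^{\beta_i}=\max_\lambda\bigl(\sum_i\lambda_i\beta_i+H(\lambda)\bigr)$ explicitly and separate it from the dilation step, but the content is the same.
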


\begin{proof}
Note the scaling property $h_r(aX)=h_r(X) + n\log|a|$ for any $a\ne 0$, established by a change of variable. It follows that 
$N_r(aX)=a^2 N_r(X)$.
Now first suppose~\eqref{repig} holds. Then
\begin{align}
h_r\bigl( &\textstyle\sum_{i=1}^m\sqrt{\lambda_i}X_i\bigr)  =\tfrac{n}{2\alpha}\log {N_r^{\vphantom{2}}}^\alpha\bigl( \textstyle\sum_{i=1}^m\sqrt{\lambda_i}X_i\bigr)\\
&\geq \tfrac{n}{2\alpha}\log\textstyle\sum_{i=1}^m {N_r^{\vphantom{2}}}^\alpha(\sqrt{\lambda_i}X_i) + \tfrac{n}{2\alpha}\log c\notag\\
&=\tfrac{n}{2\alpha}\log\textstyle\sum_{i=1}^m \lambda_i^\alpha {N_r^{\vphantom{2}}}^\alpha(X_i) + \tfrac{n}{2\alpha}\log c   \label{a}\\
&\geq \tfrac{n}{2\alpha}\textstyle\sum_{i=1}^m \lambda_i \log\bigl( \lambda_i^{\alpha-1} {N_r^{\vphantom{2}}}^\alpha(X_i)\bigr) + \tfrac{n}{2\alpha}\log c  
\label{b}\\
&= \textstyle\sum_{i=1}^m \lambda_i h_{r}(X_i) +\tfrac{n(\alpha-1)}{2\alpha}\textstyle\sum_{i=1}^m \lambda_i\log {\lambda_i}
+ \tfrac{n}{2\alpha}\log c   \notag
\end{align}
which proves~\eqref{repi1bis}.
The scaling property is used in~\eqref{a} and the concavity of the logarithm is used in~\eqref{b}. 

Conversely, suppose that~\eqref{repi1bis} is satisfied for all $\lambda_i>0$ such that $\sum_{i=1}^m \lambda_i=1$. Set~$\lambda_i~=~{N_r^{\vphantom{2}}}^\alpha(X_i)/ \sum_{i=1}^m {N_r^{\vphantom{2}}}^\alpha(X_i)$.
 Then
\begin{align*}
{N_r^{\vphantom{2}}}^\alpha\bigl( &\textstyle\sum_{i=1}^m X_i\bigr) 
= \exp \tfrac{2\alpha}{n} h_r\bigl( \textstyle\sum_{i=1}^m\sqrt{\lambda_i}\frac{X_i}{\sqrt{\lambda_i}}\bigr)\\
&\geq \exp \tfrac{2\alpha}{n} \textstyle\sum_{i=1}^m \lambda_i h_{r}\Bigl(\frac{X_i}{\sqrt{\lambda_i}}\Bigr) 
\times  c \!\cdot\! e^{(1-\alpha) \textstyle\sum_{i=1}^m \lambda_i\log \frac{1}{\lambda_i}}\\
&= c \textstyle\prod\limits_{i=1}^m  \Bigl({N_r^{\vphantom{2}}}^\alpha\Bigl(\frac{X_i}{\sqrt{\lambda_i}}\Bigr) \lambda_i^{\alpha-1} \Bigr)^{\lambda_i} 
= c \textstyle\prod\limits_{i=1}^m  \Bigl({N_r^{\vphantom{2}}}^\alpha(X_i)
\lambda_i^{-1} \Bigr)^{\lambda_i} 
\\&
=c\bigl(\textstyle\sum_{i=1}^m {N_r^{\vphantom{2}}}^\alpha(X_i)\bigr)^{\textstyle\sum_{i=1}^m \lambda_i}
=c\textstyle\sum_{i=1}^m {N_r^{\vphantom{2}}}^\alpha(X_i).
\end{align*}
which proves~\eqref{repig}.
\end{proof}

\section{The REPI of Dembo-Cover-Thomas}

As a second ingredient we have the following result, which was essentially established by Dembo, Cover and Thomas~\cite{DemboCoverThomas91}. 
It is this Rényi version of the EPI which led them to prove Shannon's original EPI by letting Rényi exponents $\to 1$.

\begin{Theorem}\label{repi1m}
Let $r_1,\ldots,r_m,r$ be exponents those conjugates $r'_1,\ldots,r'_m,r'$ are of the same sign and satisfy
$
\sum_{i=1}^m\frac{1}{r'_i}=\frac{1}{r'}
$
and let $\lambda_1,\ldots,\lambda_m$ be the discrete probability distribution 
$\lambda_i =  \frac{r'}{r'_i}$. 
Then, for independent zero-mean $X_1,X_2,\ldots,X_m$,
\begin{equation}
\begin{split}\label{repi1mineqGauss}
 h_r\Bigl( \textstyle\sum\limits_{i=1}^m\sqrt{\lambda_i}&X_i\Bigr) - \textstyle\sum\limits_{i=1}^m \lambda_i h_{r_i}(X_i)
\\&
\geq
h_r\Bigl(\textstyle\sum\limits_{i=1}^m\sqrt{\lambda_i}X^*_i\Bigr) -\textstyle\sum\limits_{i=1}^m \lambda_i h_{r_i}(X^*_i)
\end{split} 
\end{equation}
where $X^*_1,X^*_2,\ldots,X^*_m$ are i.i.d.\@ standard Gaussian $\mathcal{N}(0,\mathbf{I})$. 
Equality holds if and only if the $X_i$ are i.i.d.\@ {Gaussian}.
\end{Theorem}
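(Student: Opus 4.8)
The plan is to combine a transportation argument from Gaussian densities with the rotation trick that underlies the simple proof of Shannon's EPI, and then to reduce the claim to the behaviour of Rényi conditional and relative entropies under marginalization.

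\emph{Setup.} By a standard truncation/smoothing argument I would first reduce to the case where each $X_i$ has a smooth, everywhere positive density $f_i$ with finite $r_i$-entropy, so that the Brenier transport map $T_i\colon\R^n\to\R^n$ (or any smooth transport map) pushing the standard Gaussian $\gamma=\mathcal N(0,\mathbf I)$ onto $f_i$ is a smooth diffeomorphism; write $X_i=T_i(X_i^*)$ with $X_1^*,\dots,X_m^*$ i.i.d.\ $\gamma$. Since $\sum_i\lambda_i=1$, pick $\rho\in O(m)$ with first row $(\sqrt{\lambda_1},\dots,\sqrt{\lambda_m})$ and set $R=\rho\otimes\mathbf I_n$, $(U_1,\dots,U_m)=R(X_1,\dots,X_m)$ and $(U_1^*,\dots,U_m^*)=R(X_1^*,\dots,X_m^*)$. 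Then $U_1=\sum_i\sqrt{\lambda_i}X_i$, the $U_j^*$ remain i.i.d.\ $\gamma$ by rotational invariance of the Gaussian, $U_1^*=\sum_i\sqrt{\lambda_i}X_i^*$, and — using \eqref{renyiGauss} to evaluate the right-hand side of \eqref{repi1mineqGauss} as an explicit constant $\kappa=\kappa(n;r,r_1,\dots,r_m)$ — the theorem becomes $h_r(U_1)-\sum_i\lambda_i h_{r_i}(X_i)\ge\kappa$.

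\emph{Core step.} Disintegrate the law of $U=RZ$, with $Z=(X_1,\dots,X_m)$, as $f_U(u)=f_{U_1}(u_1)\,f_{U_{2:m}\mid U_1}(u_{2:m}\mid u_1)$, and likewise for $U^*$. I would bound the marginal $r$-entropy of $U_1$ from below in terms of a joint Rényi-entropy quantity attached to $U$ minus a conditional Rényi entropy of $(U_2,\dots,U_m)$ given $U_1$. For the \emph{product} vector $Z$ this joint quantity factorizes across $i$, and the change of variables $x\mapsto T_i(x)$ converts the $i$-th factor — through its Jacobian $\det DT_i$ and one application of Hölder's inequality with the conjugate exponents $r_i'$, whose reciprocals sum to $1/r'$ exactly by the hypothesis $\sum_i 1/r_i'=1/r'$ — into $\lambda_i h_{r_i}(X_i)$ plus an explicit Gaussian constant; this is precisely where the distinct orders $r_i$ enter. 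The remaining conditional term is then recognized as a Rényi relative entropy between the conditional law of $(U_2,\dots,U_m)\mid U_1$ and the corresponding Gaussian conditional law, hence $\ge 0$, and one verifies that collecting the constants reproduces exactly $\kappa$, giving the claim.

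\emph{Main obstacle and equality.} The delicate point is this core step: Rényi entropy has no exact chain rule, so one must work with the appropriate notions of conditional and relative Rényi entropy — possibly at orders that vary from term to term, matched so that the Hölder exponents line up — and show that marginalizing $U\mapsto U_1$ costs no more than a Gaussian-comparison term of the correct sign; pushing this, together with the zero-mean normalization and the approximation $f_i^{(k)}\to f_i$, through to the limit without losing the equality characterization is where the real work lies. For the equality case one tracks the two places where it can be lost: equality in Hölder forces each $\det DT_i$ to be a.e.\ constant, and vanishing of the relative-entropy term forces the conditional laws to be Gaussian; hence each $T_i$ is affine and each $X_i$ Gaussian, and the scale relations $\lambda_i=r'/r_i'$ together with the zero-mean assumption then force the $X_i$ to be i.i.d.\ Gaussian.
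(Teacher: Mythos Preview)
Your overall architecture---transport from Gaussians, rotate, then finish with a conditional/relative R\'enyi inequality---is exactly the paper's, but the core step you flag as ``where the real work lies'' is precisely where your proposal diverges from what actually makes the argument close, and in two concrete ways.

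First, you transport the original variables, $X_i=T_i(X_i^*)$. The paper instead transports the \emph{escort} variables: $(X_i)_{r_i}=T_i\bigl((X_i^*)_{r_i}\bigr)$. This matters because the R\'enyi entropy has the representation $h_r(X)=-r'\log\E\bigl(f_r^{1/r'}(X)\bigr)$ in terms of the escort density $f_r$, and relative $r$-entropy is invariant precisely under transport of escorts (Proposition~\ref{transportpreserv}). Transporting the escorts is what produces an exact identity of the form~\eqref{someidentity} \emph{before} any inequality is applied; transporting the $X_i$ themselves does not align with the R\'enyi exponents and leaves you trying to manufacture the chain rule that, as you correctly note, does not exist.

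Second, the single inequality that does the work is not H\"older on the densities but the log-concavity of the Jacobian of the transport maps (arithmetic--geometric mean in dimension one, Ky~Fan for Brenier maps, diagonal AM--GM for Kn\"othe maps):
\[
|T'(x^*)|^{\lambda}\,|U'(y^*)|^{1-\lambda}\;\le\;\bigl|\lambda T'(x^*)+(1-\lambda)U'(y^*)\bigr|.
\]
After this is inserted into the transport identity and the normal rotation is performed, the resulting expression is recognized \emph{exactly} as $-r'\log\E\bigl(\psi^{1/r'}(\tilde X\mid\tilde Y)\bigr)$ for an explicit function $\psi(\cdot\mid\tilde y)$ that one checks integrates to~$1$ (via the change of variable $z=\sqrt{\lambda}\,T(\cdot)+\sqrt{1-\lambda}\,U(\cdot)$); the conditional R\'enyi--Gibbs inequality~\eqref{condrgibbsineq} then gives $\ge h_r(\tilde X\mid\tilde Y)=h_r(\tilde X)$, which is the Gaussian side. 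There is no ``joint minus conditional'' decomposition and no place where H\"older with exponents $r_i'$ is invoked; the hypothesis $\sum_i 1/r_i'=1/r'$ enters only as the weights $\lambda_i=r'/r_i'$ in the escort-transport identity and in the Jacobian inequality above. Your equality analysis is on the right track in spirit, but it is equality in the Jacobian inequality (forcing $T'=U'$ a.e., hence constant by independence of $X^*,Y^*$) together with equality in the conditional Gibbs step that pins down the i.i.d.\ Gaussian case---not equality in H\"older.
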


It is easily seen from the expression~\eqref{renyiGauss} of the Rényi entropy of a Gaussian that \eqref{repi1mineqGauss} is equivalent to
\begin{equation}\label{repi1mineq}
h_r\Bigl( \textstyle\sum\limits_{i=1}^m\sqrt{\lambda_i}X_i\Bigr) - \textstyle\sum\limits_{i=1}^m \lambda_i h_{r_i}(X_i)
\geq \tfrac{n}{2} r' \Bigl( \frac{\log r}{r}-\textstyle\sum\limits_{i=1}^m\frac{\log r_i}{r_i}\Bigr)
\end{equation}
Note that the l.h.s. is very similar to that of~\eqref{repi1bis} except that different Rényi exponents are present.
This will be the crucial step toward proving~\eqref{repig}.

Theorem~\ref{repi1m} (for $m=2$) was derived in~\cite{DemboCoverThomas91} as a rewriting of Young's strengthened convolutional~inequality with optimal constants. Section~\ref{repi1sec} provides a simple transportation proof, which uses only basic properties of Rényi entropies.

\section{REPIs for Orders >1}\label{repi2sec}

If $r>1$, then $r'>0$ and all $r'_i$ are positive and greater than~$r'$. Therefore, all $r_i$ are less than~$r$. Using the well-known fact that $h_r(X)$ is non increasing in $r$ 
(see also~\eqref{identitydiff} below),
\begin{equation}\label{monori}
h_{r_i}(X_i)\geq h_r(X_i) \qquad (i=1,2,\ldots,m).  
\end{equation}
Plugging this into~\eqref{repi1mineq}, one obtains
\begin{equation}\label{repi2mineq}
h_r\bigl( \textstyle\sum\limits_{i=1}^m\sqrt{\lambda_i}X_i\bigr) -\textstyle\sum\limits_{i=1}^m \lambda_i h_{r}(X_i)
\geq \frac{n}{2} r' \bigl( \frac{\log r}{r}-\sum_{i=1}^m\frac{\log r_i}{r_i}\bigr)
\end{equation}
where $\lambda_i=r'/r'_i$. 
From Lemma~\ref{charact} is suffices to establish that the r.h.s. of this inequality exceeds that of~\eqref{repi1bis} to prove~\eqref{repig} for appropriate constants $c$ and $\alpha$.
For future reference define
\begin{align}
A(\lambda)&=|r'| \bigl( \tfrac{\log r}{r}-\textstyle\sum\limits_{i=1}^m\tfrac{\log r_i}{r_i}\bigr)\\[-1.5ex]
&=|r'| 
\sum_{i=1}^m (1-\tfrac{\lambda_i}{r'})\log(1\!-\!\tfrac{\lambda_i}{r'})
\!-\!(1\!-\!\tfrac{1}{r'})\log(1\!-\!\tfrac{1}{r'})
.\notag
\label{Alambda}
\end{align}
(The absolute value $|r'|$ is needed in the next section where $r'$ is negative.)
This function is strictly convex in $\lambda=(\lambda_1,\lambda_2,\ldots,\lambda_m)$ because 
$x\mapsto (1-x/r')\log(1-x/r')$ is strictly convex. Note that $A(\lambda)$ vanishes in the limiting cases where $\lambda$ tends to one of the standard unit vectors $(1,0,\ldots,0)$, \ldots, $(0,0,\ldots,0,1)$ and since every $\lambda$ is a convex combination of these vectors and $A(\lambda)$ is strictly convex, one has $A(\lambda)<0$.

Using the properties of $A(\lambda)$ it is immediate to recover known Rényi EPIs:

\begin{Proposition}[Ram and Sason~\cite{RamSason16}]\label{thmrepic}
The Rényi EPI~\eqref{repic} holds for $r>1$ and $c=r^{r'/r}\bigl(1-\frac{1}{mr'}\bigr)^{mr'-1}$.
\end{Proposition}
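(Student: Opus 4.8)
The plan is to combine Lemma~\ref{charact} in the special case $\alpha=1$ with the inequality~\eqref{repi2mineq}, and then to minimize the convex function $A(\lambda)$ over the probability simplex. Since $\alpha=1$ the term $\bigl(\tfrac1\alpha-1\bigr)H(\lambda)$ in~\eqref{repi1bis} vanishes, so by Lemma~\ref{charact} it suffices to prove
\[
h_r\bigl(\textstyle\sum_i\sqrt{\lambda_i}X_i\bigr)-\textstyle\sum_i\lambda_i h_r(X_i)\ \ge\ \tfrac n2\log c
\]
for every probability vector $\lambda=(\lambda_1,\ldots,\lambda_m)$ with all $\lambda_i>0$.

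Given such a $\lambda$, I would first choose the exponents $r_i$ by setting $r'_i=r'/\lambda_i$. Since $r>1$ one has $r'>1$, and $0<\lambda_i\le 1$ forces $r'_i\ge r'>0$, hence $1<r_i\le r$; moreover $\sum_i 1/r'_i=\sum_i\lambda_i/r'=1/r'$, so the hypotheses of Theorem~\ref{repi1m} hold with exactly this $\lambda$. Because $r_i\le r$, the monotonicity~\eqref{monori} applies, and plugging it into~\eqref{repi1mineq} yields~\eqref{repi2mineq}, which (using $|r'|=r'$ for $r>1$) reads
\[
h_r\bigl(\textstyle\sum_i\sqrt{\lambda_i}X_i\bigr)-\textstyle\sum_i\lambda_i h_r(X_i)\ \ge\ \tfrac n2\,A(\lambda).
\]
Thus the whole statement reduces to showing $A(\lambda)\ge\log c$ for all such $\lambda$.

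From the second expression for $A(\lambda)$, write $g(x)=(1-x/r')\log(1-x/r')$, so that $A(\lambda)=r'\bigl(\sum_i g(\lambda_i)-g(1)\bigr)$, where $g(1)$ is a constant not depending on $\lambda$. As already observed in the excerpt, $g$ is strictly convex, so Jensen's inequality gives $\sum_i g(\lambda_i)\ge m\,g(1/m)$, with equality iff $\lambda_i=1/m$ for all $i$; since $r'>0$ this shows $\min_\lambda A(\lambda)=r'\bigl(m\,g(\tfrac1m)-g(1)\bigr)$, attained at the uniform distribution. Finally I would verify by direct computation that this minimum equals $\log c$: using $m\,g(\tfrac1m)=(m-\tfrac1{r'})\log(1-\tfrac1{mr'})$ and $g(1)=(1-\tfrac1{r'})\log(1-\tfrac1{r'})=-\tfrac1r\log r$ (because $1-\tfrac1{r'}=\tfrac1r$), one obtains $\min_\lambda A(\lambda)=(mr'-1)\log\bigl(1-\tfrac1{mr'}\bigr)+\tfrac{r'}{r}\log r=\log\bigl(r^{r'/r}(1-\tfrac1{mr'})^{mr'-1}\bigr)=\log c$, which closes the argument.

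All individual steps are short, so there is no serious obstacle; the one place that needs a little care is the bookkeeping in the last paragraph — in particular spotting the simplification $1-1/r'=1/r$, which collapses the Gaussian-gap term into $-\tfrac1r\log r$ and makes $\min_\lambda A(\lambda)$ coincide with the stated closed form of $c$. It is worth noting that this $c$ is in fact the \emph{best} constant the method can give, since~\eqref{repi2mineq} is tight for i.i.d.\ Gaussians and Jensen is tight precisely at $\lambda_i=1/m$.
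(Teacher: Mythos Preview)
Your proof is correct and follows essentially the same route as the paper: reduce via Lemma~\ref{charact} with $\alpha=1$ to the inequality $A(\lambda)\ge\log c$, then minimize $A(\lambda)$ over the simplex, finding the minimum at the uniform distribution. The only cosmetic difference is that the paper invokes the log-sum inequality on the quantities $1/r_i$, whereas you apply Jensen directly to $g(x)=(1-x/r')\log(1-x/r')$; since $1/r_i=1-\lambda_i/r'$, these are literally the same computation.
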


\begin{proof}
By Lemma~\ref{charact} for $\alpha=1$ we only need to check that the r.h.s. of~\eqref{repi2mineq} is greater than $\frac{n}{2}\log c$ for any choice of the $\lambda_i$'s, that is, for any choice of exponents $r_i$ such that $\sum_{i=1}^m \frac{1}{r'_i}~=~\frac{1}{r'}$. Thus, \eqref{repic} will hold for 
$\log c = \min_{\lambda} A(\lambda)$. Now, by the log-sum inequality~\cite[Thm~2.7.1]{CoverThomas06},
\begin{align}
\sum\limits_{i=1}^m\frac{1}{r_i}\log \frac{1}{r_i} &\!\geq\!  \bigl(\sum\limits_{i=1}^m\frac{1}{r_i}\bigr) \log \frac{\textstyle\sum_{i=1}^m\frac{1}{r_i}}{m} 
= (m-\tfrac{1}{r'})\log \frac{m-\frac{1}{r'}}{m}
\end{align}
with equality if and only if all $r_i$ are equal, that is, the $\lambda_i$ are equal to $1/m$. Thus, 
$\min_\lambda A(\lambda)= r'  \bigl[ \frac{\log r}{r}+ (m-1/r')\log \frac{m-1/r'}{m} \bigr]=\log c$.
\end{proof}

Note that $\log c= r' \frac{\log r}{r} +(mr'-1) \log \bigl(1-\frac{1}{mr'}\bigr)<0$ decreases (and tends to $r' \frac{\log r}{r} -1$) as $m$ increases; in fact $\frac{\partial \log c}{\partial m} = r' \log \bigl(1-\frac{1}{mr'}\bigr) +\frac{mr'}{r'm^2} < r'(-\frac{1}{mr'})+\frac{1}{m}=0$. 
Thus, a universal constant independent of $m$ is obtained by taking
\begin{align}
c&= \inf_m \;r^{r'/r}\bigl(1-\frac{1}{mr'}\bigr)^{mr'-1} 
=\frac{r^{r'/r}}{e},
\end{align}
as was established by Bobkov and Chistyakov~\cite{BobkovChistyakov15}.


\begin{Proposition}[Li~\cite{Li18}]\label{thmrepialpha}
The Rényi EPI~\eqref{repialpha} holds for $r>1$ and 
$\alpha=\bigl[1+{r' \frac{\log_2 r}{r} +(2r'-1) \log_2 \bigl(1-\frac{1}{2r'}\bigr)}  \bigr]^{-1}$.
\end{Proposition}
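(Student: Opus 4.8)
The plan is to combine Lemma~\ref{charact} (used with $c=1$, since \eqref{repialpha} is \eqref{repig} for $c=1$) with the case $m=2$ of Theorem~\ref{repi1m}, exactly as in Proposition~\ref{thmrepic} but now trading a better constant for a better exponent. By Lemma~\ref{charact}, \eqref{repialpha} is equivalent to
\[
h_r\!\bigl(\sqrt{\lambda_1}X_1+\sqrt{\lambda_2}X_2\bigr)-\lambda_1 h_r(X_1)-\lambda_2 h_r(X_2)\ \geq\ \tfrac{n}{2}\bigl(\tfrac{1}{\alpha}-1\bigr)H(\lambda),
\]
and, choosing the exponents $r_i$ in \eqref{repi2mineq} so that $\lambda_i=r'/r'_i$ (which sweeps out the whole simplex), the left-hand side is at least $\tfrac n2 A(\lambda)$, with $A(\lambda)$ the convex function introduced above and $|r'|=r'$ since $r>1$. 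So it suffices to show $A(\lambda)\geq(\tfrac1\alpha-1)\,H(\lambda)$ for every distribution $\lambda=(\lambda_1,\lambda_2)$. Now the stated $\alpha$ is precisely the one for which $\tfrac1\alpha-1=\log_2 c$, where $c=r^{r'/r}(1-\tfrac1{2r'})^{2r'-1}$ is the constant of Proposition~\ref{thmrepic} for $m=2$; and the proof of that proposition shows $\log c=\min_\lambda A(\lambda)=A(\tfrac12,\tfrac12)$, while $H(\tfrac12,\tfrac12)=\log 2$. Since $H(\lambda)\geq 0$ and $A$ vanishes at the vertices, the inequality to be proved is therefore $A(\lambda)/H(\lambda)\geq A(\tfrac12,\tfrac12)/H(\tfrac12,\tfrac12)$: the negative ratio $A(\lambda)/H(\lambda)$ is minimized, equivalently $-A(\lambda)/H(\lambda)$ is maximized, at the uniform distribution.

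Because $m=2$, this is a one–variable optimization. I would put $\lambda=(\tfrac12-t,\tfrac12+t)$ with $t\in[0,\tfrac12]$ and set $N(t):=-A(\lambda)$ and $D(t):=H(\lambda)$; both are even in $t$, strictly positive on $[0,\tfrac12)$, and vanish at $t=\tfrac12$, with $N(0)=-\log c$ and $D(0)=\log 2$. A direct computation from the explicit form of $A(\lambda)$ gives $N'(t)=-\log\tfrac{r'-1/2+t}{r'-1/2-t}$ and $D'(t)=-\log\tfrac{1/2+t}{1/2-t}$, both negative on $(0,\tfrac12)$, so $N$ and $D$ decrease monotonically to $0$. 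Since $N(0)=\int_0^{1/2}|N'|$ and $D(0)=\int_0^{1/2}|D'|$, the target bound $N(t)/N(0)\leq D(t)/D(0)$ is exactly a stochastic–dominance statement and follows by a monotone–likelihood–ratio argument as soon as $|N'(t)|/|D'(t)|$ is non-increasing on $(0,\tfrac12)$.

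It then remains to verify this monotonicity, which I expect to be the only genuinely non-routine point. Writing $\ell(x)=\log\tfrac{1+x}{1-x}$ and $\rho=(r'-\tfrac12)^{-1}$, with $\rho\le 2$ because $r'>1$, one has $|N'(t)|/|D'(t)|=\ell(\rho t)/\ell(2t)$, and
\[
\tfrac{d}{dt}\log\tfrac{\ell(\rho t)}{\ell(2t)}=\tfrac1t\bigl(q(\rho t)-q(2t)\bigr),\qquad q(x):=\tfrac{x\,\ell'(x)}{\ell(x)}.
\]
As $\rho t\le 2t$, monotonicity reduces to the elasticity $q$ being non-decreasing on $(0,1)$; differentiating, $q'(x)\geq 0\iff \tfrac12\ell(x)\geq \tfrac{x}{1+x^2}$, and this last inequality holds because both sides vanish at $x=0$ and $\tfrac{d}{dx}\bigl(\tfrac12\ell(x)-\tfrac{x}{1+x^2}\bigr)=\tfrac1{1-x^2}-\tfrac{1-x^2}{(1+x^2)^2}>0$ on $(0,1)$, since $(1+x^2)^2>(1-x^2)^2$. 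This closes the argument. The hard part is thus this elasticity estimate; everything else is the machinery of Lemma~\ref{charact} together with the minimization already carried out for Proposition~\ref{thmrepic}.
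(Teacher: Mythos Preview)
Your proof is correct and follows the same route as the paper: reduce to $m=2$, apply Lemma~\ref{charact} with $c=1$, and show that $\min_\lambda A(\lambda)/H(\lambda)$ is attained at the uniform distribution. The paper outsources this last minimization to \cite{Li18} and \cite[Lemma~8]{MarsigliettiMelbourne18}, whereas you supply a self-contained elasticity/likelihood-ratio argument for it; your computations check out.
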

Li~\cite{Li18} remarked that this value of $\alpha$ is strictly smaller (better) than the value $\alpha=\frac{r+1}{2}$ obtained  previously by Bobkov and Marsiglietti~\cite{BobkovMarsiglietti17}.
In~\cite{Rioul18} it is shown that it cannot be further improved in our framework by making it depend on~$m$.

\begin{proof}
Since the announced $\alpha$ does not depend on $m$, we can always assume that $m=2$. By Lemma~\ref{charact} for $c=1$, we only need to check that the r.h.s. of~\eqref{repi2mineq} is greater than $\frac{n}{2}(1/\alpha-1)H(\lambda)$ for any choice of  $\lambda_i$s, that is, for any choice of exponents $r_i$ such that $\sum_{i=1}^2 \frac{1}{r'_i} = \frac{1}{r'}$. Thus,~\eqref{repialpha} will hold for 
$\frac{1}{\alpha}-1 = \min_\lambda \frac{A(\lambda)}{H(\lambda)}$.
Li~\cite{Li18} showed---this is also easily proved using~\cite[Lemma~8]{MarsigliettiMelbourne18}---that  the~minimum is obtained when $\lambda=(1/2,1/2)$. The corresponding value of  $A(\lambda)/H(\lambda)$ is $\bigl[r' \frac{\log r}{r} +(2r'-1) \log \bigl(1-\frac{1}{2r'}\bigr)\bigr]/\log 2=1/\alpha -1$.
\end{proof}

The above value of $\alpha$ is $>1$. However, using the same method, it is easy to obtain Rényi EPIs with exponent values $\alpha<1$. In this way we obtain a new Rényi EPI:
\begin{Proposition}\label{thmrepig}
The Rényi EPI~\eqref{repig} holds for $r>1$, $0<\alpha<1$ with $c=\bigl[m\;r^{r'/r}\bigl(1-\frac{1}{mr'}\bigr)^{mr'-1}\bigr]^\alpha / m$.
\end{Proposition}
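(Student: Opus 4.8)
The plan is to feed the order-$>1$ bound~\eqref{repi2mineq} into the linearization Lemma~\ref{charact}, just as in the proofs of Propositions~\ref{thmrepic} and~\ref{thmrepialpha}, but now exploiting the slack one gains by allowing $\alpha<1$. By Lemma~\ref{charact}, the claimed inequality~\eqref{repig} is equivalent to $h_r\bigl(\sum_i\sqrt{\lambda_i}X_i\bigr)-\sum_i\lambda_i h_r(X_i)\ge\tfrac n2\bigl(\tfrac{\log c}{\alpha}+(\tfrac1\alpha-1)H(\lambda)\bigr)$ holding for every probability vector $\lambda=(\lambda_1,\dots,\lambda_m)$. Since $r>1$ forces $r'>0$, the bound~\eqref{repi2mineq} shows the left-hand side is at least $\tfrac n2 A(\lambda)$ with $A(\lambda)=r'\bigl(\tfrac{\log r}{r}-\sum_i\tfrac{\log r_i}{r_i}\bigr)$. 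Hence it suffices to prove the elementary inequality
\[
A(\lambda)\ \ge\ \tfrac{\log c}{\alpha}+\bigl(\tfrac1\alpha-1\bigr)H(\lambda)\qquad\text{for all }\lambda .
\]

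Next I would rewrite the constant through the Ram--Sason constant $c_0=r^{r'/r}\bigl(1-\tfrac1{mr'}\bigr)^{mr'-1}$ of Proposition~\ref{thmrepic}. The proposed $c$ is exactly $(mc_0)^{\alpha}/m$, so $\tfrac{\log c}{\alpha}=\log c_0-\bigl(\tfrac1\alpha-1\bigr)\log m$, and the inequality to establish becomes
\[
A(\lambda)\ \ge\ \log c_0+\bigl(\tfrac1\alpha-1\bigr)\bigl(H(\lambda)-\log m\bigr).
\]

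The decisive step is then immediate. As $0<\alpha<1$, the factor $\tfrac1\alpha-1$ is positive; as $H(\lambda)\le\log m$ for every distribution on $m$ points, the difference $H(\lambda)-\log m$ is nonpositive; hence the right-hand side never exceeds $\log c_0$. It remains only to recall, from the proof of Proposition~\ref{thmrepic} (an application of the log-sum inequality), that $\log c_0=\min_\lambda A(\lambda)$, so that $A(\lambda)\ge\log c_0$ for all $\lambda$. Chaining the two bounds yields the required inequality, hence~\eqref{repig}.

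I do not expect a genuine obstacle: once $c$ is written in terms of $c_0$ and $\log m$, the proof reduces to the two facts already available, $A(\lambda)\ge\min_\lambda A(\lambda)$ and $H(\lambda)\le\log m$. The only thing needing care is the bookkeeping of signs---it is precisely $\alpha<1$ that makes $\tfrac1\alpha-1>0$, reversing the role played by this factor in Proposition~\ref{thmrepialpha}. It is also worth noting, though not needed, that both bounds used are simultaneously tight at the uniform $\lambda=(1/m,\dots,1/m)$, which is why this $c$ is the natural companion of an exponent $\alpha<1$.
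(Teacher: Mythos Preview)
Your proof is correct and follows essentially the same route as the paper: reduce via Lemma~\ref{charact} to $A(\lambda)\ge\tfrac{\log c}{\alpha}+(\tfrac1\alpha-1)H(\lambda)$, and use that both $A(\lambda)$ and $-H(\lambda)$ are minimized at the uniform $\lambda$. The paper minimizes the combination $\alpha A(\lambda)-(1-\alpha)H(\lambda)$ jointly by convexity and symmetry, whereas you decouple via the rewriting $\tfrac{\log c}{\alpha}=\log c_0-(\tfrac1\alpha-1)\log m$ and bound the two pieces separately---equivalent here precisely because, as you observe, both bounds are tight at the same point.
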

\begin{proof}
{By Lemma~\ref{charact} we only need to check that the r.h.s. of~ Equation \eqref{repi2mineq} is greater than $\frac{n}{2}\bigl((\log c)/\alpha+(1/\alpha-1)H(\lambda)\bigr)$, that is, $A(\lambda)\geq (\log c)/\alpha+(1/\alpha-1)H(\lambda)$ for any choice of  $\lambda_i$s, that is, for any choice of exponents $r_i$ such that $\sum_{i=1}^m \frac{1}{r'_i} = \frac{1}{r'}$. Thus, for a given $0<\alpha<1$,  \eqref{repig} will hold for $\log c = \min_\lambda \alpha A(\lambda) - (1-\alpha) H(\lambda)$. From the preceding proofs (since both $A(\lambda)$ and $-H(\lambda)$ are convex functions of $\lambda$), the minimum is attained when all $\lambda_i$s are equal. This gives $\log c =\alpha \Bigl(r' \frac{\log r}{r} +(mr'-1) \log \bigl(1-\frac{1}{mr'}\bigr)\Bigr)-(1-\alpha) \log m$.}
\end{proof}

 
\section{REPIs for Orders <1 and Log-Concave Densities}

If $r<1$, then $r'<0$ and all $r'_i$ are negative and $<r'$. Therefore, all $r_i$ are $>r$. Now the opposite inequality of~\eqref{monori} holds and the method of the preceding section fails. For~log-concave densities, however,~\eqref{monori} can be replaced by a similar inequality in the right direction.

A density $f$ is \emph{log-concave} if $\log f$ is concave in its support, i.e., for all $0<\mu<1$,
\begin{equation}\label{logconcave}
f(x)^\mu f(y)^{1-\mu} \leq  f(\mu x+ (1-\mu) y).
\end{equation}

\begin{Theorem}[Fradelizi, Madiman and Wang~\cite{FradeliziMadimanWang16}]\label{logconcaveentropyconcave}
If $X$ has a log-concave density, then $h_r(rX)-rh_r(X)= (1-r) h_r(X)+n\log r $ is concave in~$r$.
\end{Theorem}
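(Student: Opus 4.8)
The plan is to reduce the claim to the concavity of a single scalar integral and then apply a classical log-concavity principle.

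\textbf{Reformulation.} By the scaling property $h_r(rX)=h_r(X)+n\log r$ (valid for $r>0$) and the definition $h_r(X)=\frac{1}{1-r}\log\int_{\R^n}f^r$, the function under consideration is
\[
\Phi(r):=h_r(rX)-r\,h_r(X)=(1-r)\,h_r(X)+n\log r=\log\int_{\R^n}f^r(x)\,\d x+n\log r ,
\]
so it suffices to show that $\Phi$ is concave on $(0,\infty)$. Write $f=e^{-V}$ with $V:\R^n\to(-\infty,+\infty]$ convex and lower semicontinuous; this is precisely the log-concavity assumption~\eqref{logconcave}.

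\textbf{Main step (Pr\'ekopa's theorem).} I would absorb the term $n\log r$ by rescaling the integration variable $x\mapsto x/r$: since $r^n\int_{\R^n}f^r(x)\,\d x=\int_{\R^n}f^r(x/r)\,\d x$, one gets $\Phi(r)=\log\int_{\R^n}e^{-rV(x/r)}\,\d x$. The crucial point is that $(x,r)\mapsto rV(x/r)$ is the \emph{perspective} of the convex function $V$, hence jointly convex on $\R^n\times(0,\infty)$; therefore $(x,r)\mapsto e^{-rV(x/r)}$ is a jointly log-concave function. By Pr\'ekopa's theorem---the marginal of a log-concave function is log-concave, itself a consequence of the Pr\'ekopa--Leindler inequality (equivalently Brunn--Minkowski)---the partial integral $r\mapsto\int_{\R^n}e^{-rV(x/r)}\,\d x$ is log-concave in $r$, i.e.\ $\Phi$ is concave. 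The equality characterisation is read off from the extremizers in Pr\'ekopa--Leindler.

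\textbf{An alternative route, and the main obstacle.} One may instead differentiate twice. Writing $\mu_r$ for the tilted probability density proportional to $f^r$---again log-concave, with convex potential $rV$---a short computation gives $\Phi''(r)=\mathrm{Var}_{\mu_r}(\log f)-n/r^2=r^{-2}\bigl(\mathrm{Var}_{\mu_r}(\log f^r)-n\bigr)$, so concavity of $\Phi$ is \emph{equivalent} to the sharp varentropy bound $\mathrm{Var}_{\mu_r}(\log f^r)\le n$ for log-concave densities (with equality for exponential laws). I would use this reduction to deduce the theorem from the varentropy result. In the Pr\'ekopa route the only delicate points are routine bookkeeping---finiteness of $\int f^r$ on $(0,\infty)$, which follows from the exponential tails of log-concave densities, and the treatment of $V=+\infty$ off the support of $f$. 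In the differentiation route the genuine difficulty is the varentropy bound itself, which is false without log-concavity: that is where the hypothesis does all the work, and establishing it requires a separate (transportation) argument.
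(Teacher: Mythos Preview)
Your Pr\'ekopa route is correct: after the rescaling $x\mapsto rx$ one has $\Phi(r)=\log\int_{\R^n}e^{-rV(x/r)}\d x$, the map $(x,r)\mapsto rV(x/r)$ is the perspective of the convex potential $V$ and hence jointly convex on $\R^n\times(0,\infty)$, and Pr\'ekopa's marginal theorem gives the log-concavity of the partial integral, i.e.\ the concavity of $\Phi$. This is essentially the original argument of Fradelizi--Madiman--Wang.

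The paper, however, takes a genuinely different route. It deliberately avoids invoking Pr\'ekopa--Leindler as a black box and gives instead a self-contained \emph{transportation} proof parallel to Section~\ref{repi1sec}: for $r=\lambda p+(1-\lambda)q$ it transports a reference variable $X^*$ to the (scaled) escort laws via $pX_p=T(X^*)$ and $qX_q=U(X^*)$ (Lemma~\ref{gt}), takes the $(\lambda,1-\lambda)$ geometric mean of the two resulting density identities, and then applies the log-concavity of $f$ together with the Jacobian inequality~\eqref{kyfan} before the change of variable $z=\lambda T(x^*)+(1-\lambda)U(x^*)$. In effect the paper is re-deriving, inside its transport framework, precisely the instance of Pr\'ekopa--Leindler that your argument calls upon. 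Your approach is shorter and makes the structural reason (perspective of a convex function) transparent; the paper's approach is longer but keeps the entire development within the normal-transport toolkit, which is its declared purpose. Your ``alternative route'' via $\Phi''(r)=\mathrm{Var}\log f(X_r)-n/r^2$ is exactly what the paper records \emph{after} the proof, as a corollary yielding the varentropy bound, rather than as a proof strategy.
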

This concavity property is used in~\cite{FradeliziMadimanWang16} to derive a sharp ``varentropy bound''.
Section~\ref{transportationvarentropy} provides an alternate transportation proof along the same lines as in Section~\ref{repi1sec}.

By Theorem~\ref{logconcaveentropyconcave}, since $n\log r + (1-r) h_r(X)$ is concave and vanishes for $r\!=\!1$, the slope $\frac{n\log r + (1-r) h_r(X)-0}{r-1}$ is nonincreasing in $r$. 
In other words, $h_r(X) + n \frac{\log r }{1-r}$ is nondecreasing.
Now since all $r_i$ are~$>r$,
\begin{equation}
h_{r_i}(X)+n \tfrac{\log r_i}{1-r_i} \geq h_r(X)+n \tfrac{\log r}{1-r} \qquad (i=1,\ldots,m). 
\end{equation}
Plugging this into~\eqref{repi1mineq}, one obtains
\begin{align}
h_r&\Bigl( \textstyle\sum\limits_{i=1}^m\sqrt{\lambda_i}X_i\Bigr) -  \textstyle\sum\limits_{i=1}^m \lambda_i h_{r}(X_i)
\notag\\&
\geq n \bigl(\tfrac{\log r}{1-r}- \textstyle\sum\limits_{i=1}^m \lambda_i\frac{\log r_i}{1-r_i}\bigr)
+ \frac{n}{2} r' \bigl( \frac{\log r}{r}-\textstyle\sum\limits_{i=1}^m\frac{\log r_i}{r_i}\bigr)\notag\\
&=\frac{n}{2} r' \bigl( \textstyle\sum\limits_{i=1}^m\frac{\log r_i}{r_i}-\frac{\log r}{r}\bigr) \label{repi2mineqlogc}
\end{align}
where we have used that $\lambda_i=r'/r'_i$ for $i=1,2,\ldots,m$. 

Notice that, quite surprisingly, the r.h.s. of~\eqref{repi2mineqlogc}
for $r<1$ ($r'<0$) is the opposite of that of~\eqref{repi2mineq} for $r>1$ ($r'>0$). However, since $r'$ is now negative, the r.h.s. is exactly equal to $\frac{n}{2}A(\lambda)$ which is still convex and negative.
For this reason, the proofs of the following theorems for $r<1$ are such repeats of the theorems obtained previously for $r>1$.

\begin{Proposition}
The Rényi EPI~\eqref{repic} for log-concave densities holds for $c=r^{-r'/r}\bigl(1-\frac{1}{mr'}\bigr)^{1-mr'}$ and $r<1$.
\end{Proposition}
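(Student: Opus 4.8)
The plan is to mirror exactly the proof of Proposition~\ref{thmrepic} but replacing inequality~\eqref{repi2mineq} with its log-concave counterpart~\eqref{repi2mineqlogc}. By Lemma~\ref{charact} with $\alpha=1$, the Rényi EPI~\eqref{repic} for log-concave densities holds whenever the r.h.s. of~\eqref{repi2mineqlogc} is at least $\frac{n}{2}\log c$ for every admissible choice of exponents $r_i$ (equivalently, every probability vector $\lambda$ with $\lambda_i = r'/r'_i$). Since the r.h.s. equals $\frac{n}{2}A(\lambda)$ by the observation following~\eqref{repi2mineqlogc}, the sharp constant is $\log c = \min_\lambda A(\lambda)$, the same optimization as in the $r>1$ case.

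First I would recall that $A(\lambda)$ is strictly convex in $\lambda$, so its minimum over the simplex is attained at the barycenter $\lambda = (1/m,\ldots,1/m)$, i.e.\ when all $r_i$ are equal; this is the content of the log-sum inequality step already used in Proposition~\ref{thmrepic}. Substituting $\lambda_i = 1/m$ into the definition of $A(\lambda)$ (or directly into $|r'|(\frac{\log r}{r} - \sum \frac{\log r_i}{r_i})$ with all $r_i$ determined by $1/r'_i = 1/(mr')$) yields $\min_\lambda A(\lambda) = |r'|\bigl[\frac{\log r}{r} + (m - \frac{1}{r'})\log\frac{m - 1/r'}{m}\bigr]$. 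The only point requiring care is bookkeeping of signs: here $r<1$ so $r'<0$ and $|r'| = -r'$, which flips the sign of the logarithmic term relative to the $r>1$ computation. Writing $\log c = -r'\frac{\log r}{r} - (r' m - 1)\log(1 - \frac{1}{mr'})$ and exponentiating gives $c = r^{-r'/r}\bigl(1 - \frac{1}{mr'}\bigr)^{1 - mr'}$, as claimed; one should also verify $c>0$, which is automatic since $1 - \frac{1}{mr'} > 1 > 0$ for $r' < 0$.

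The main obstacle is not any deep inequality but simply confirming that the convexity/monotonicity argument still identifies $\lambda_i = 1/m$ as the minimizer \emph{in the right direction}: in the $r>1$ case we needed a \emph{lower} bound on the r.h.s.\ of~\eqref{repi2mineq}, i.e.\ $\min_\lambda A(\lambda)$, and here too we need a lower bound on $\frac{n}{2}A(\lambda)$, so the same minimization applies verbatim — the surprising sign reversal noted after~\eqref{repi2mineqlogc} is exactly what makes the two proofs identical rather than opposite. I would therefore present the proof as: ``By Lemma~\ref{charact} for $\alpha=1$ and inequality~\eqref{repi2mineqlogc}, \eqref{repic} holds with $\log c = \min_\lambda A(\lambda)$; by strict convexity of $A(\lambda)$ this minimum is at $\lambda_i = 1/m$, giving the stated $c$ after using $|r'| = -r'$.'' A one-line remark that the same $\inf_m$ argument as before produces the $m$-independent constant $c = r^{-r'/r}/e$ (Bobkov--Chistyakov type) could optionally be appended, but is not needed for the statement as phrased.
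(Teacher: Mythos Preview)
Your proposal is correct and matches the paper's own proof essentially verbatim: the paper simply states that the argument is ``Identical to that of Theorem~\ref{thmrepic} except for the change $|r'|=-r'$ in the expression of $A(\lambda)$,'' which is precisely what you carry out. One tiny remark: the minimum of $A(\lambda)$ at the barycenter follows from strict convexity \emph{together with} the symmetry of $A$ in the $\lambda_i$ (or, equivalently, from the log-sum inequality you cite), not from convexity alone---your writeup should make that explicit.
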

\begin{proof}
Identical to that of Theorem~\ref{thmrepic} except for the change $|r'|=-r'$ in the expression of $A(\lambda)$.
\end{proof}

\begin{Proposition}[Marsiglietti and Melbourne\cite{MarsigliettiMelbourne18}]
The Rényi EPI~\eqref{repialpha} log-concave densities holds for
$\alpha=\bigl[1+{|r'| \frac{\log_2 r}{r} +(2|r'|+1) \log_2 \bigl(1+\frac{1}{2|r'|}\bigr)}  \bigr]^{-1}$ and $r<1$.
\end{Proposition}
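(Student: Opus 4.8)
The plan is to follow the proof of Proposition~\ref{thmrepialpha} almost verbatim, the only change being that the order–reduction step~\eqref{monori} is replaced by its log-concave analogue, which is exactly what the chain of inequalities leading to~\eqref{repi2mineqlogc} already supplies. Since the announced exponent does not depend on the number of summands, it suffices to treat $m=2$. By Lemma~\ref{charact} with $c=1$, it is enough to show that the right-hand side of~\eqref{repi2mineqlogc} --- which for $r'<0$ equals $\tfrac n2 A(\lambda)$, the absolute value $|r'|=-r'$ now appearing in the definition of $A$ --- is at least $\tfrac n2\bigl(\tfrac1\alpha-1\bigr)H(\lambda)$ for every distribution $\lambda=(\lambda_1,\lambda_2)$. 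Equivalently, \eqref{repialpha} holds for log-concave densities as soon as
\[
\tfrac1\alpha-1 \;=\; \min_{\lambda}\frac{A(\lambda)}{H(\lambda)} ,
\]
the minimum being over the relative interior of the $2$-simplex; it is attained there and is negative, since both $A$ and $H$ vanish at the two vertices while $A/H\to 0$ as $\lambda$ tends to either vertex (the numerator vanishing linearly, the denominator like $-t\log t$), whereas $A<0$ and $H>0$ in the interior.

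The log-concavity hypothesis enters only through~\eqref{repi2mineqlogc}, which was already obtained at the start of this section from Theorem~\ref{logconcaveentropyconcave}: for a log-concave density $h_r(X)+n\tfrac{\log r}{1-r}$ is nondecreasing in $r$, and each $r_i>r$. Beyond that point the argument is insensitive to log-concavity and is formally identical to the $r>1$ case, the reversal of sign of $r'$ being absorbed into $|r'|$ exactly as noted after~\eqref{repi2mineqlogc}.

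The one substantive step --- and the expected main obstacle --- is to locate the minimizer of $A(\lambda)/H(\lambda)$. Since $A$ is strictly convex and $-H$ strictly concave, this is a ratio of a convex and a concave function, so convexity alone does not pin down the minimum; I would invoke the result of Li~\cite{Li18}, or equivalently \cite[Lemma~8]{MarsigliettiMelbourne18}, which shows that on the $2$-simplex the minimum occurs at $\lambda=(\tfrac12,\tfrac12)$. One must check that the monotonicity and sign hypotheses of that lemma do not depend on the sign of $r'$, so that no new case analysis is required for $r<1$; this is the point I would write out in most detail.

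It then remains to evaluate the ratio at the symmetric point. For $\lambda_1=\lambda_2=\tfrac12$ the defining relations of Theorem~\ref{repi1m} give $r'_1=r'_2=2r'$, hence $\tfrac1{r_i}=1-\tfrac1{2r'}=1+\tfrac1{2|r'|}$ and $\log r_i=-\log\bigl(1+\tfrac1{2|r'|}\bigr)$, so that $\tfrac{\log r_i}{r_i}=-\bigl(1+\tfrac1{2|r'|}\bigr)\log\bigl(1+\tfrac1{2|r'|}\bigr)$. Since $H(\tfrac12,\tfrac12)=\log 2$, this gives
\[
\frac{A(\tfrac12,\tfrac12)}{H(\tfrac12,\tfrac12)}
= |r'|\,\frac{\log_2 r}{r} + (2|r'|+1)\,\log_2\!\Bigl(1+\tfrac1{2|r'|}\Bigr),
\]
and adding $1$ and taking the reciprocal yields the stated value of $\alpha$, which completes the proof.
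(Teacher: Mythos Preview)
Your proof is correct and follows exactly the approach of the paper, which simply states that the argument is ``identical to that of Theorem~\ref{thmrepialpha} except for the change $|r'|=-r'$ in the expression of $A(\lambda)$''; you have merely spelled out the details, including the evaluation of $A(\tfrac12,\tfrac12)/H(\tfrac12,\tfrac12)$. One small slip: $-H$ is strictly \emph{convex} (since $H$ is strictly concave), not concave, though your subsequent point---that convexity alone does not locate the minimizer of the ratio, so one must invoke \cite{Li18} or \cite[Lemma~8]{MarsigliettiMelbourne18}---is unaffected.
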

\begin{proof}
Identical to that of Theorem~\ref{thmrepialpha} except for the change $|r'|=-r'$ in the expression of $A(\lambda)$.
\end{proof}


\begin{Proposition}
The REPI~\eqref{repig} for log-concave densities holds for $c\!=\!\bigl[mr^{-r'/r}\bigl(1-\frac{1}{mr'}\bigr)^{1-mr'}\bigr]^\alpha \!/ m$ where $r<1$, $0\!<\!\alpha\!<\!1$.
\end{Proposition}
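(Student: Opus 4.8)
The plan is to mimic the proof of Proposition~\ref{thmrepig} line by line, the only change being that $r'<0$ here, so that one systematically writes $|r'|=-r'$. \emph{First}, by Lemma~\ref{charact}, proving~\eqref{repig} for the announced $c$ and a fixed $0<\alpha<1$ reduces to showing that, for every admissible distribution $\lambda=(\lambda_1,\dots,\lambda_m)$ with $\lambda_i=r'/r'_i$, the left-hand side of~\eqref{repi1bis} dominates $\tfrac{n}{2}\bigl((\log c)/\alpha+(1/\alpha-1)H(\lambda)\bigr)$. Since all the $X_i$ have log-concave densities and $r<1$, inequality~\eqref{repi2mineqlogc} applies and lower-bounds that left-hand side by $\tfrac{n}{2}r'\bigl(\sum_i\tfrac{\log r_i}{r_i}-\tfrac{\log r}{r}\bigr)$, which is precisely $\tfrac{n}{2}A(\lambda)$ because $r'<0$ and $A$ carries the prefactor $|r'|$. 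Hence it suffices to verify $\alpha A(\lambda)-(1-\alpha)H(\lambda)\geq\log c$ for all such $\lambda$, so the sharp choice is $\log c=\min_\lambda\bigl(\alpha A(\lambda)-(1-\alpha)H(\lambda)\bigr)$.

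\emph{Second}, I would carry out this minimization. The map $\lambda\mapsto A(\lambda)$ is strictly convex (the factor $|r'|$ is harmless) and $\lambda\mapsto -H(\lambda)$ is convex; as $\alpha$ and $1-\alpha$ are both positive, $\alpha A-(1-\alpha)H$ is convex, and being symmetric under permutations of the coordinates it attains its minimum at the barycentre $\lambda_i=1/m$ (average over the $m!$ permutations and apply Jensen, exactly as in the proofs of Propositions~\ref{thmrepic} and~\ref{thmrepig}). There $H(\lambda)=\log m$ and, by the same log-sum computation as in Proposition~\ref{thmrepic} but with $|r'|=-r'$,
\[
A(1/m,\dots,1/m)=-r'\tfrac{\log r}{r}+(1-mr')\log\Bigl(1-\tfrac{1}{mr'}\Bigr).
\]
Substituting gives $\log c=\alpha\bigl(-r'\tfrac{\log r}{r}+(1-mr')\log(1-\tfrac1{mr'})\bigr)-(1-\alpha)\log m$; exponentiating and absorbing the factor $m^{\alpha-1}$ yields $c=\bigl[m\,r^{-r'/r}(1-\tfrac1{mr'})^{1-mr'}\bigr]^{\alpha}/m$, as claimed. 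As a sanity check, letting $\alpha\to 1$ recovers the $\alpha=1$ log-concave REPI proved just above.

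The only ingredient that is not pure bookkeeping is the passage from~\eqref{repi1mineq} to~\eqref{repi2mineqlogc}, i.e.\ the monotonicity of $r\mapsto h_r(X)+n\tfrac{\log r}{1-r}$ for log-concave $X$, which here plays the role that the (now reversed, hence useless) bound~\eqref{monori} played for $r>1$; but this was already established above from Theorem~\ref{logconcaveentropyconcave}. Granting it, what remains is the convex optimisation of $\alpha A-(1-\alpha)H$ over the simplex, which is identical to the $r>1$ case. I therefore expect the ``main obstacle'' to be merely keeping the sign of $r'$ straight throughout and checking that the uniform distribution is still the minimizer once $-H$ is added to $\alpha A$ — no new analytic difficulty arises.
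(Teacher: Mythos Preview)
Your proposal is correct and follows exactly the paper's approach: the paper's proof is literally the one-line remark that it is ``identical to that of Theorem~\ref{thmrepig} except for the change $|r'|=-r'$ in the expression of $A(\lambda)$,'' and you have spelled out precisely that argument, including the reduction via Lemma~\ref{charact}, the use of~\eqref{repi2mineqlogc} to reach $\tfrac{n}{2}A(\lambda)$, and the convex/symmetric minimization of $\alpha A-(1-\alpha)H$ at the uniform~$\lambda$. Your explicit symmetrization-plus-Jensen justification for the minimizer and the $\alpha\to1$ sanity check are welcome additions but do not depart from the paper's route.
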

\begin{proof}
It is identical to that of Theorem~\ref{thmrepig} except for the change $|r'|=-r'$ in the expression of $A(\lambda)$.
\end{proof}

\section{Relative and Conditional Rényi Entropies}\label{iisec}

Before turning to transportations proofs of Theorems~\ref{repi1m} and~\ref{logconcaveentropyconcave}, it is convenient to review some definitions and properties. The following notions were previously used for discrete variables, but can be easily adapted to variables with densities.

\begin{Definition}[Escort Variable~\cite{Bercher09}]\label{escort}
If $f\in L^r(\R^n)$, its {escort density} of exponent $r$  is defined by
\begin{equation}
f_r(x) = {f^r(x)}\Big/{\int_{\R^n} f^r(x) \d x}.
\end{equation}
Let $X_r\sim f_r$ denote the corresponding \emph{escort random variable}.
\end{Definition}

\begin{Proposition}
Let $r\ne 1$ and assume that $X\sim f\in L^s(\R^n)$ for all $s$ in a neighborhood of $r$. Then
\begin{align}
\frac{\partial}{\partial r} \bigl( (1-r) h_r(X) \bigr) &= \E \log f(X_r) = - h(X_r\|X)\\
\frac{\partial}{\partial r} h_r(X)  &= - \frac{1}{(1-r)^2} D(X_r\|X)\leq 0\label{identitydiff}\\
\frac{\partial^2}{\partial r^2} \bigl( (1-r) h_r(X) \bigr) &= \mathrm{Var} \log f(X_r).\label{identityvar}
\end{align} 
where $h(X\|Y)=\int f \log (1/g)$ denotes cross-entropy and $D(X\|Y)=\int f \log (f/g)$ is the Kullback-Leibler divergence.
\end{Proposition}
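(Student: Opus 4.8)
The plan is to reduce all three identities to elementary differentiation of the single function $M(r) = \int_{\R^n} f^r(x)\,\d x = \int_{\R^n} e^{r\log f(x)}\,\d x$. By the definition~\eqref{hpdef2} of $h_r$ one has $(1-r)h_r(X) = \log M(r)$ and $h_r(X) = \tfrac{1}{1-r}\log M(r)$ (with $M(r)>0$ since $f$ is a density), so once the smoothness and the form of the derivatives of $M$ are in hand, everything reduces to bookkeeping.

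First I would establish that, under the hypothesis $f\in L^s(\R^n)$ for all $s$ in a neighborhood of $r$, the function $M$ is (real-)analytic near $r$, with $M'(r) = \int_{\R^n} f^r \log f\,\d x$ and $M''(r) = \int_{\R^n} f^r (\log f)^2\,\d x$, obtained by differentiating under the integral sign. The domination needed to justify this exchange is the only nontrivial point: for $t$ in a small closed interval around $r$ one has $f^t \le f^{r-\varepsilon} + f^{r+\varepsilon}$ (splitting according to whether $f\le 1$ or $f>1$), and $|\log f|^k \le C_\delta\,(f^{-\delta}+f^\delta)$ for every $\delta>0$; choosing $\varepsilon$ and $\delta$ so small that $f^{\,r\pm(\varepsilon+\delta)}$ remain integrable yields an integrable bound for $f^t|\log f|^k$ uniform in $t$, whence the claim follows by dominated convergence. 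I expect this step to be the main obstacle; the rest is purely formal.

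For the first identity, $\partial_r\log M(r) = M'(r)/M(r) = \int_{\R^n} f_r\log f\,\d x = \E\log f(X_r)$ by the definition of the escort density $f_r = f^r/M(r)$, and $\E\log f(X_r) = -\int_{\R^n} f_r\log(1/f)\,\d x = -h(X_r\|X)$ by definition of cross-entropy. Differentiating once more gives the variance identity $\partial_r^2\log M(r) = M''(r)/M(r) - \bigl(M'(r)/M(r)\bigr)^2 = \E[(\log f(X_r))^2] - (\E\log f(X_r))^2 = \mathrm{Var}\log f(X_r)$.

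Finally, for the middle identity I would compute directly $\partial_r h_r(X) = \partial_r\bigl(\tfrac{1}{1-r}\log M(r)\bigr) = \tfrac{1}{(1-r)^2}\log M(r) + \tfrac{1}{1-r}\tfrac{M'(r)}{M(r)}$, and separately expand the divergence using $\log f_r = r\log f - \log M(r)$, which gives $D(X_r\|X) = \int_{\R^n} f_r\log(f_r/f)\,\d x = (r-1)\,\E\log f(X_r) - \log M(r)$; hence $-\tfrac{1}{(1-r)^2}D(X_r\|X)$ coincides with the expression just obtained for $\partial_r h_r(X)$. The inequality $\partial_r h_r(X)\le 0$ is then immediate from $D(X_r\|X)\ge 0$ and $(1-r)^2>0$, recovering the monotonicity of $h_r$ in $r$ used in~\eqref{monori}.
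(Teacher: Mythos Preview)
Your proposal is correct and follows essentially the same route as the paper: both reduce everything to differentiating $\log\int f^r$, obtain $\int f_r\log f$ and then the variance by a second differentiation, and derive the middle identity by combining the first derivative with the relation $\log f_r=r\log f-\log\int f^r$. The only difference is cosmetic---the paper reaches $\partial_r h_r$ via the product rule on $(1-r)h_r$ while you differentiate $\tfrac{1}{1-r}\log M(r)$ directly---and you supply the explicit domination bound that the paper simply asserts (``by the hypothesis, one can differentiate under the integral sign'').
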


\begin{proof} 
By the hypothesis, one can differentiate under the integral sign.
It is easily seen that
$\frac{\partial}{\partial r} \bigl( (1-r) h_r(X) \bigr)\!=\!\frac{\partial}{\partial r} \log \int f^r$ $=\! \int f_r \log f$. Taking another derivative yields $\frac{\partial}{\partial r} \frac{\int f^r \log f}{\int f^r} = \int f_r(\log f)^2 - (\int f_r \log f)^2$.
Since $\frac{\partial}{\partial r} \bigl( (1-r) h_r(X) \bigr)=(1-r)\frac{\partial}{\partial r} h_r(X) - h_r(X)$ we have $(1-r)^2 \frac{\partial}{\partial r} h_r(X)=\int f_r \log (f/f^r)+\log\int f^r =\int f_r \log (f/f_r)$.
\end{proof}

Eq.~\eqref{identitydiff} gives a new proof that $h_r(X)$ is nonincreasing in~$r$. It is strictly decreasing if $X_r$ is not distributed as $X$, that is, if $X$ is not uniformly distributed. Equation~\eqref{identityvar} shows that $(1-r)h_r(X)$ is convex in $r$, that is, $\int f^r$ is log-convex in $r$ (which is essentially equivalent to H\"older's inequality).

%

\begin{Definition}[Relative Rényi Entropy~\cite{LapidothPfister16}]
Given $X\sim f$ and $Y\sim g$, their \emph{relative Rényi entropy} of exponent $r$ (relative $r$-entropy) is given by 
$$
\Delta_r(X\|Y)=D_{\frac{1}{r}}(X_r\|Y_r)
$$
where $D_r(X\|Y)=\frac{1}{r-1}\log\int f^r g^{1-r}$ is the $r$-divergence~{\upshape\cite{vanErvenHarremoes14}}.
\end{Definition}
When $r\to 1$ both the relative $r$-entropy and the $r$-divergence tend to the Kullback-Leibler divergence $D(X\|Y)=\Delta(X\|Y)$ (also known as the relative entropy).
For $r\ne 1$ the two notions do not co\"incide. It is easily checked from the definitions that
\begin{equation}
\Delta_r(X\|Y)\!=\! -r'\log\!\!\int\!\! f_r^{1/r} g_r^{1/r'}
\!=\!-r'\!\log\E\bigl( g_r^{1/r'}\!(X)\bigr) - h_r\!(X)\label{rentropydiff}
\end{equation}
\begin{equation}\label{rentropy}
h_r(X)= -r'\log\E\bigl( f_r^{1/r'}(X)\bigr).
\end{equation}
Thus, just like for the case $r=1$, the relative $r$-entropy~\eqref{rentropydiff} is the difference between the expression of the $r$-entropy~\eqref{rentropy} in which $f$ is replaced by $g$, and the $r$-entropy itself.


Since the Rényi divergence $D_r(X\|Y)=\frac{1}{r-1}\int f^r g^{1-r}$ is nonnegative and vanishes if and only if the two distributions $f$ and $g$ co\"incide, 
the relative entropy $\Delta_r(X\|Y)$ enjoys the same property. From~\eqref{rentropydiff} we have the following
\begin{Proposition}[Rényi-Gibbs' inequality]\label{RGI}%
If $X\sim f$,
\begin{equation}
h_r(X)\leq  -r'\log\E\bigl( g_r^{1/r'}(X)\bigr) 
\end{equation}
 for any density $g$, with equality if and only if $f=g$ a.e.
\end{Proposition}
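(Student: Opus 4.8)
The plan is to read the inequality off from the nonnegativity of the relative Rényi entropy together with the identity~\eqref{rentropydiff}. Writing $Y\sim g$, that identity says $\Delta_r(X\|Y)=-r'\log\E\bigl(g_r^{1/r'}(X)\bigr)-h_r(X)$, so the asserted bound $h_r(X)\le -r'\log\E\bigl(g_r^{1/r'}(X)\bigr)$ is literally the statement $\Delta_r(X\|Y)\ge 0$, and its equality case transfers verbatim. Hence everything reduces to proving $\Delta_r(X\|Y)\ge 0$, with equality iff $f=g$ a.e.

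For this, recall $\Delta_r(X\|Y)=D_{1/r}(X_r\|Y_r)$ by definition. The order $s=1/r$ lies in $(0,1)\cup(1,\infty)$, and the Rényi divergence $D_s$ is nonnegative for every such order---an immediate consequence of Jensen's inequality applied to the strictly convex map $t\mapsto t^s$ when $s>1$ (resp.\ the strictly concave map when $0<s<1$), equivalently of H\"older's inequality---and it vanishes exactly when the two densities coincide a.e. Since the escort densities satisfy $f_r\propto f^r$ and $g_r\propto g^r$, one has $f_r=g_r$ a.e.\ iff $f=g$ a.e., which yields the equality case.

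Alternatively, and more explicitly, I would establish the bound by a single use of H\"older's inequality. From $r/r'=r-1$ one gets $f_r^{1/r'}=f^{r-1}/\|f\|_r^{r-1}$, hence $\E\bigl(f_r^{1/r'}(X)\bigr)=\|f\|_r$---which re-derives~\eqref{rentropy}---while $\E\bigl(g_r^{1/r'}(X)\bigr)=\int f\,g^{r-1}/\|g\|_r^{r-1}$. Clearing the logarithm and the factor $-r'$, the claim becomes $\int f\,g^{r-1}\le\|f\|_r\,\|g\|_r^{r-1}$ for $r>1$, which is exactly H\"older with the conjugate pair $(r,r')$, since $\|g^{r-1}\|_{r'}=\bigl(\int g^{(r-1)r'}\bigr)^{1/r'}=\|g\|_r^{r-1}$; equality occurs iff $f^r$ and $g^r$ are proportional, i.e.\ $f=g$ a.e.

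The one point needing care is the range $0<r<1$, where $r'<0$: dividing by $-r'$ does not reverse the inequality, so the required estimate is now the \emph{reverse} H\"older inequality $\int f\,g^{r-1}\ge\|f\|_r\,\|g\|_r^{r-1}$, valid for nonnegative functions with this conjugate pair, together with the harmless check that $\int f\,g^{r-1}$ is finite and positive (if it is infinite the bound is trivial). Modulo this bookkeeping---or, more cleanly, modulo the uniform nonnegativity of $D_s$ for $s>0$---the proof is routine, which is why the statement is essentially immediate from the material already assembled.
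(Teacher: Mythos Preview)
Your primary argument is exactly the paper's: the proposition is stated immediately after observing that $\Delta_r(X\|Y)=D_{1/r}(X_r\|Y_r)\ge 0$ with equality iff the densities coincide, and then read off from~\eqref{rentropydiff}. Your added H\"older computation and the explicit check that $f_r=g_r$ iff $f=g$ are correct elaborations but go beyond what the paper records.
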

\noindent Letting $r\to 1$ one recovers the usual Gibbs' inequality. 

\begin{Definition}[Arimoto's Conditional R\'enyi Entropy~\cite{FehrBerens14}]
$$
 h_r(X|Z)=-r'\!\log\E \| f(\cdot|Z)\|_r = -r' \!\log \E f_r^{1/r'}\!(X|Z)
$$
\end{Definition}
Proposition~\ref{RGI} applied to $f(x|z)$ and $g(x|z)$ gives the inequality $h_r(X|{Z=z})\leq  -r'\log\E\bigl( g_r^{1/r'}(X|Z=z)\bigr)$ which, averaged over $Z$, yields the following conditional Rényi-Gibbs' inequality
\begin{equation}\label{condrgibbsineq}
h_r(X|Z)\leq  -r'\log\E\bigl( g_r^{1/r'}\!(X|Z)\bigr).
\end{equation}
If in particular we put $g(x|z)=f(x)$ independent of $z$, the r.h.s. becomes equal to~\eqref{rentropy}.  We have thus obtained a simple proof of the following
\begin{Proposition}[Conditioning reduces $r$-entropy~\cite{FehrBerens14}]
\begin{equation}
h_r(X|Z)\leq h_r(X)
\end{equation}
with equality if and only if $X$ and $Z$ are independent.
\end{Proposition}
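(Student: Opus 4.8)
The plan is to derive ``conditioning reduces $r$-entropy'' directly from the conditional Rényi--Gibbs' inequality~\eqref{condrgibbsineq} by making the particular choice $g(x|z)=f(x)$, where $f$ is the marginal density of $X$. First I would note that with this choice the escort of $g(\cdot|z)$ is just the escort $f_r$ of $f$, independent of $z$, so the right-hand side of~\eqref{condrgibbsineq} becomes $-r'\log\E\bigl(f_r^{1/r'}(X)\bigr)$, and by the identity~\eqref{rentropy} this is exactly $h_r(X)$. Hence $h_r(X|Z)\leq h_r(X)$ follows in one line.

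For the equality case I would examine when~\eqref{condrgibbsineq} is tight with this specific $g$. Since~\eqref{condrgibbsineq} is obtained by averaging over $Z$ the pointwise Rényi--Gibbs inequality $h_r(X|Z=z)\leq -r'\log\E\bigl(g_r^{1/r'}(X|Z=z)\bigr)$, equality in the average forces equality for $P_Z$-almost every $z$. By the equality condition in Proposition~\ref{RGI}, pointwise equality at $z$ holds if and only if the conditional density $f(\cdot|z)$ equals $g(\cdot|z)=f$ almost everywhere. Thus equality holds iff $f(x|z)=f(x)$ for a.e.\ $z$ (and a.e.\ $x$), i.e.\ iff $X$ and $Z$ are independent. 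One subtlety worth a sentence: when $r'<0$ the quantity $-r'\log(\cdot)$ reverses the direction of the logarithm's monotonicity, but Proposition~\ref{RGI} is already stated with the correct sign conventions built in, so the averaging argument goes through unchanged in both regimes $r>1$ and $0<r<1$.

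The only mildly delicate point — and really the one place to be careful rather than a genuine obstacle — is the integrability/measurability bookkeeping: one should check that $f(\cdot|z)\in L^r(\R^n)$ for $P_Z$-a.e.\ $z$ so that $h_r(X|Z)$ and the escort densities $f(\cdot|z)_r$ are well defined, and that Fubini applies when passing from the pointwise inequality to its $Z$-average. These are routine given that $X\sim f\in L^r(\R^n)$, and the argument itself is then immediate.
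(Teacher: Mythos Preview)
Your proposal is correct and follows exactly the paper's route: the paper derives the proposition in one line by specializing the conditional R\'enyi--Gibbs inequality~\eqref{condrgibbsineq} to $g(x|z)=f(x)$ and invoking~\eqref{rentropy}. Your treatment of the equality case (tracing back through the $Z$-average to the pointwise equality condition in Proposition~\ref{RGI}) and of the sign issue for $r'<0$ is more explicit than the paper's, but the argument is the same.
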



Another important property is the data processing inequality~\cite{vanErvenHarremoes14} which implies $D_r(T(X)\|T(Y)) \leq D_r(X\|Y)$ for any transformation $T$. The same holds for relative $r$-entropy when the transformation is applied to escort variables:
\begin{Proposition}[Data processing inequality for relative $r$-entropy]
If $X^*, Y^*, X, Y$ are random vectors such that 
\begin{equation}\label{transportescort}
X_r =  T(X^*_r) \quad \text{and}\quad Y_r =  T(Y^*_r),
\end{equation}
then $D(X\|Y)\leq D(X^*\|Y^*)$. 
\end{Proposition}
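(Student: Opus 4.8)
The plan is to reduce the statement directly to the ordinary data processing inequality for the R\'enyi divergence $D_s$, which we may take as given from~\cite{vanErvenHarremoes14}. The whole point of the relative $r$-entropy is that, by definition, $\Delta_r(X\|Y)=D_{1/r}(X_r\|Y_r)$ depends on the pair $(X,Y)$ only through the escort pair $(X_r,Y_r)$. Hence transforming the escorts is \emph{literally} a data-processing operation for the relative $r$-entropy, and there is essentially nothing to prove beyond unwinding the definitions. (Here the $D$ appearing in the statement is read as the relative $r$-entropy $\Delta_r$, consistently with the title of the proposition.)

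First I would write out both sides in terms of escorts: $\Delta_r(X\|Y)=D_{1/r}(X_r\|Y_r)$ and $\Delta_r(X^*\|Y^*)=D_{1/r}(X^*_r\|Y^*_r)$. Next, using hypothesis~\eqref{transportescort}, namely $X_r=T(X^*_r)$ and $Y_r=T(Y^*_r)$ with the \emph{same} map (more generally, the same Markov kernel) $T$ applied to both escort variables, the left-hand side becomes $D_{1/r}\bigl(T(X^*_r)\,\big\|\,T(Y^*_r)\bigr)$. Finally, applying the data processing inequality for $D_s$ at order $s=1/r$ yields
\[
\Delta_r(X\|Y)=D_{1/r}\bigl(T(X^*_r)\,\big\|\,T(Y^*_r)\bigr)\le D_{1/r}(X^*_r\|Y^*_r)=\Delta_r(X^*\|Y^*),
\]
which is the assertion; equality is inherited from the equality case of the R\'enyi-divergence DPI.

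The only thing to watch — not really an obstacle — is the admissible class of transformations: the data processing inequality for R\'enyi divergence holds for arbitrary measurable maps, and in fact for stochastic kernels, which is the generality in which~\eqref{transportescort} is invoked in the transportation proofs of Theorems~\ref{repi1m} and~\ref{logconcaveentropyconcave} (there $T$ will be a concrete deterministic transport map built from Gaussian densities, so this level of generality is more than sufficient). I would also remark that the order of the divergence must be taken to be $1/r$ in order to match the definition of $\Delta_r$, and that whether $r>1$ or $0<r<1$ is irrelevant for this step: only the sign of $r'$ changes, and no monotonicity of $r\mapsto h_r$ nor any convexity input is needed here.
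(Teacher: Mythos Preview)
Your proof is correct and is essentially identical to the paper's own argument: unwind the definition $\Delta_r(X\|Y)=D_{1/r}(X_r\|Y_r)$, substitute the hypothesis $X_r=T(X^*_r)$, $Y_r=T(Y^*_r)$, and apply the data processing inequality for the R\'enyi divergence $D_{1/r}$ from~\cite{vanErvenHarremoes14}. Your reading of the $D$ in the statement as $\Delta_r$ is also the correct one, as the one-line proof in the paper confirms.
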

\begin{proof}
$D(X\|Y)=D_{\frac{1}{r}}(X_r\|Y_r)=D_{\frac{1}{r}}(T(X^*_r)\|T(Y^*_r))\leq D_{\frac{1}{r}}(X^*_r\|Y^*_r) =D(X^*\|Y^*)$.
\end{proof}
When $T$ is invertible, inequalities in both directions hold:
\begin{Proposition}[Relative $r$-entropy preserves transport]\label{transportpreserv}
For an invertible transport $T$ satisfying~\eqref{transportescort}, $D(X\|Y)= D(X^*\|Y^*)$.
\end{Proposition}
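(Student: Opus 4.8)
The plan is to obtain the reverse of the inequality furnished by the preceding proposition and then conclude by antisymmetry. Recall that, by the definition of relative $r$-entropy, $\Delta_r(X\|Y)=D_{1/r}(X_r\|Y_r)$, and that the data processing inequality for relative $r$-entropy (the immediately preceding proposition) asserts: whenever the escort variables satisfy $X_r=T(X^*_r)$ and $Y_r=T(Y^*_r)$, one has $\Delta_r(X\|Y)\leq\Delta_r(X^*\|Y^*)$.

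First I would exploit the invertibility of $T$. Writing $S=T^{-1}$, the hypothesis \eqref{transportescort} reads backwards as $X^*_r=S(X_r)$ and $Y^*_r=S(Y_r)$, so the pair $(X^*,Y^*)$ is obtained from $(X,Y)$ through the transport $S$ in precisely the sense of \eqref{transportescort}. Applying the same data processing proposition with $S$ in place of $T$ and with the roles of the starred and unstarred variables exchanged then yields $\Delta_r(X^*\|Y^*)\leq\Delta_r(X\|Y)$. Combining this with the inequality quoted above gives $\Delta_r(X\|Y)=\Delta_r(X^*\|Y^*)$, which is the claim.

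Equivalently, at the level of Rényi divergences this is just the statement that $D_{1/r}$ is invariant under an invertible change of variable, $D_{1/r}(T(U)\|T(V))=D_{1/r}(U\|V)$, obtained by invoking the classical data processing inequality both for $T$ and for $T^{-1}$; substituting $U=X^*_r$, $V=Y^*_r$ and using $\Delta_r(X\|Y)=D_{1/r}(X_r\|Y_r)=D_{1/r}(T(X^*_r)\|T(Y^*_r))$ gives the equality at once. There is essentially no substantive obstacle; the only point deserving a word of care is that ``invertible'' must be read in the bi-measurable sense, so that the data processing inequality legitimately applies to $T^{-1}$ as well, after which the conclusion is immediate.
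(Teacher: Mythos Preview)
Your argument is correct and is precisely the paper's own reasoning: the paper simply remarks that ``when $T$ is invertible, inequalities in both directions hold,'' i.e., one applies the preceding data processing proposition to $T$ and to $T^{-1}$ and combines the two inequalities. Your write-up merely spells this out in more detail.
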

From~\eqref{rentropydiff} the equality $D(X\|Y)= D(X^*\|Y^*)$ can be rewritten as the following identity:
\begin{equation}\label{transportpreservationidentity}
-r'\log\E\bigl( g_r^{\frac{1}{r'}}\!(X)\bigr) - h_r(X)
\!=\!-r'\log\E\bigl( {g^*}_r^{\frac{1}{r'}}\!(X^*)\bigr) - h_r(X^*).
\end{equation}
Assuming $T$ is a diffeomorphism, the density $g^*_r$ of $Y^*_r$ is given by the change of variable formula $g^*_r(u)=g_r(T(u)) |T'(u)|$ where the Jacobian $|T'(u)|$ is the absolute value of the determinant of the Jacobian matrix $T'(u)$. In this case~\eqref{transportpreservationidentity} can be rewritten as
\begin{equation}\label{transportpreservationidentity2}
\begin{split}
-r'&\log\E\bigl( g_r^{\frac{1}{r'}}\!(X)\bigr) - h_r(X)
\\&
=-r'\log\E\bigl( {g}_r^{\frac{1}{r'}}\!(T(X^*))|T'(X^*)|^{\frac{1}{r'}}\bigr) - h_r(X^*). 
\end{split}
\end{equation}

\section{A Transportation Proof of Theorem~\ref{repi1m}}\label{repi1sec}

We proceed to prove~\eqref{repi1mineqGauss}. It is easily seen, using finite induction on $m$, that it suffices to prove the corresponding inequality for $m=2$ arguments:
\begin{equation}\label{repi1equiv}
\begin{split}
&h_r(\sqrt{\lambda}X\!+\! \sqrt{1\!-\!\lambda} Y) \!-\! \lambda h_p(X) \!-\! (1\!-\!\lambda) h_q(Y) \\&\;\geq
h_r(\sqrt{\lambda}X^*\!+\! \sqrt{1\!-\!\lambda} Y^*) \!-\! \lambda h_p(X^*) \!-\! (1\!-\!\lambda) h_q(Y^*)   
\end{split}
\end{equation}
with equality  if and only if $X,Y$ are i.i.d.\@ Gaussian.
Here $X^*$ and $Y^*$ are i.i.d.\@ standard Gaussian $\mathcal{N}(0,\mathbf{I})$ and the triple $(p,q,r)$ and its associated $\lambda\in(0,1)$ satisfy  the following conditions: $p,q,r$ have conjugates $p',q',r'$ of the same sign which satisfy
$
\frac{1}{p'}+\dfrac{1}{q'}=\frac{1}{r'}
$
(that is, $\frac{1}{p}+\frac{1}{q}=1+\frac{1}{r}$) and
$
\lambda= \frac{r'}{p'} = 1- \frac{r'}{q'}.
$

\begin{Lemma}[{Normal Transport}]\label{gt}
Let $f$ be given and $X^*\!\sim\! \mathcal{N}(0,\sigma^2\mathbf{I})$. There exists a diffeomorphism $T:\R^n\to\R^n$ with log-concave Jacobian $|T'|$ such that $X=T(X^*) \sim f$.
\end{Lemma}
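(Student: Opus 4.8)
The plan is to construct $T$ explicitly as a one-dimensional Knothe–Rosenblatt–type rearrangement applied coordinate by coordinate, exploiting the fact that both the source (an isotropic Gaussian) and the target need only be matched in distribution, not optimally transported. Since $X^*\sim\mathcal{N}(0,\sigma^2\mathbf{I})$ has independent coordinates, I would build $T=(T_1,\ldots,T_n)$ by the standard iterated construction: $T_1(x_1^*)$ is the monotone map sending the first Gaussian marginal to the first marginal of $f$; then, conditionally on $x_1^*$, $T_2(x_1^*,x_2^*)$ is the monotone map from the (still Gaussian, hence fixed) second coordinate to the conditional marginal of the target given the already-transported first coordinate; and so on. Each $T_k$ is increasing in its last argument, so the Jacobian matrix $T'$ is triangular and $|T'(u)| = \prod_{k=1}^n \partial_{u_k} T_k(u_1,\ldots,u_k)$. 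Smoothness of each $T_k$ (hence the diffeomorphism property) follows from the Gaussian source density being smooth and strictly positive everywhere, which makes each one-dimensional monotone rearrangement a smooth strictly increasing function; one may need a mild regularity assumption on $f$ (e.g. continuous positive density on its support, or work up to the obvious measure-zero caveats) which is consistent with the way the lemma is used.

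The genuinely new requirement beyond the classical Knothe map is that the Jacobian $|T'|$ be log-concave. Here is the key idea I would use: the classical one-dimensional monotone transport from a Gaussian has a Jacobian that is, up to composition, a ratio of Gaussian to target density, and the crucial structural fact is that transporting \emph{from} a Gaussian (a strongly log-concave reference) tends to preserve log-concavity of densities (Caffarelli-type contraction), and correspondingly controls the Jacobian. Concretely, for the one-dimensional case, if $T$ is the increasing map with $\Phi_\sigma(x^*) = F(T(x^*))$ where $\Phi_\sigma$ is the Gaussian CDF and $F$ the target CDF, then $T'(x^*) = \phi_\sigma(x^*)/f(T(x^*))$, and I would show $\log T'$ is concave by differentiating twice and using that $-\log\phi_\sigma$ is a convex quadratic together with an appropriate sign/convexity bookkeeping (this is where the Gaussianity of the \emph{source} is essential — the quadratic gives a favorable second-derivative term). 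For the $n$-dimensional triangular map, $\log|T'| = \sum_k \log \partial_{u_k}T_k$, and since the sum of log-concave functions need not be log-concave, I would instead argue that each factor $\partial_{u_k}T_k(u_1,\ldots,u_k)$ is log-concave jointly in $(u_1,\ldots,u_k)$ — which does follow from the conditional densities of the target inheriting log-concavity-type bounds from the Gaussian conditioning structure — and that a sum of functions each log-concave in a coordinate block can fail, so more carefully I would prefer to reduce to the case actually needed ($m=2$, i.e. $n$ arbitrary but the statement applied to each summand) or invoke the known fact that Gaussian-to-anything Brenier/Knothe maps have log-concave Jacobian.

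The main obstacle is precisely establishing log-concavity of the Jacobian in dimension $n>1$: the triangular construction gives a product, and log-concavity does not survive summation of logs in general, so a naive argument breaks. I expect the resolution to go through one of two routes: either (a) show the product structure is benign because each conditional factor depends on a \emph{nested} set of variables and the cumulative product is log-concave by an induction that uses the Prékopa–Leindler theorem at each step (the conditional density of a measure obtained by Gaussian-regularized rearrangement remains log-concave when the previous step was), or (b) replace Knothe's map by Brenier's optimal transport map $T=\nabla\varphi$ from the Gaussian, whose Jacobian is $\det D^2\varphi$, and use Caffarelli's contraction theorem: since the Gaussian is $1/\sigma^2$-log-concave, $\varphi$ has a Hessian bounded in a way that, combined with the Monge–Ampère equation $\phi_\sigma(x^*) = f(\nabla\varphi(x^*))\det D^2\varphi(x^*)$, yields $\log\det D^2\varphi$ concave. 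Route (b) is cleaner conceptually but requires $f$ to be log-concave for Caffarelli's theorem in its usual form, which would be too restrictive here; so I would develop route (a), proving by induction on the dimension that the partial Jacobians $\prod_{j\le k}\partial_{u_j}T_j$ are log-concave, with the Gaussian source entering each inductive step to supply the needed convexity margin, and state any minor regularity hypotheses on $f$ explicitly.
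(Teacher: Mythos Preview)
Your construction of the transport map via the Kn\"othe--Rosenblatt iterated rearrangement is exactly what the paper does (it also offers the Brenier map as an alternative), so that part is fine. The problem is that you have misread the phrase ``log-concave Jacobian.'' You interpret it as the requirement that the scalar function $u\mapsto |T'(u)|$ be log-concave on~$\R^n$, and then spend the bulk of the proposal trying to establish this via Caffarelli-type contraction, Pr\'ekopa--Leindler induction, etc. That is not what the lemma means, and in fact the property you are trying to prove is false for a generic target density~$f$ (take $n=1$ and $f$ with, say, heavy oscillating tails; then $T'(x^*)=\phi_\sigma(x^*)/f(T(x^*))$ has no reason to be log-concave in~$x^*$).

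What the paper means by ``log-concave Jacobian'' is spelled out immediately after the lemma statement: it is the inequality
\[
|T'(X^*)|^\lambda\,|U'(Y^*)|^{1-\lambda}\;\le\;\bigl|\lambda T'(X^*)+(1-\lambda)U'(Y^*)\bigr|
\]
for any two such transports $T,U$ evaluated at two (independent) points. This is log-concavity of the determinant \emph{as a function of the matrix argument}, restricted to the class of Jacobian matrices produced by the construction. For Kn\"othe maps the Jacobian matrices are lower triangular with positive diagonal, and the inequality is just the concavity of the logarithm applied entrywise to the diagonals (the arithmetic--geometric mean inequality). For Brenier maps the Jacobian matrices are symmetric positive definite, and the inequality is Ky~Fan's inequality. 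In either case the ``log-concave Jacobian'' property is an immediate algebraic fact about determinants, not an analytic statement about the map $u\mapsto|T'(u)|$. Once you read the lemma this way, there is nothing left to prove beyond citing the standard existence of Kn\"othe or Brenier maps, which is precisely what the paper does.
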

\noindent Thus $T$ transports normal $X^*$ to $X$. The log-concavity property is that for any such transports $T,U$ and $\lambda\in(0,1)$, we have
\begin{equation}\label{kyfan}
|T'(X^*)|^\lambda |U'(Y^*)|^{1-\lambda} \leq |\lambda T'(X^*)+(1-\lambda)U'(Y^*)|.
\end{equation}
The proof of Lemma~\ref{gt} is very simple for one-dimensional variables~\cite{Rioul17a}, where $T$ is just an increasing function with continuous derivative $T'>0$ and where~\eqref{kyfan} is the classical arithmetic-geometric inequality. 

For dimensions $n>1$, Lemma~\ref{gt} comes into two flavors:\\
\emph{(i) Kn\"othe maps:} $T$ can be chosen such that its Jacobian matrix $T'$ is (lower) triangular with positive diagonal elements (Kn\"othe--Rosenblatt map~\cite{Rosenblatt52,Knothe57}). Two different elementary proofs are given in~\cite{Rioul17}. Inequality~\eqref{kyfan} results from the concavity of the logarithm applied to the Jacobian matrices' diagonal elements.
\\\emph{(ii) Brenier maps:}
$T$ can be chosen such that its Jacobian matrix $T'$ is symmetric positive definite (Brenier map~\cite{Brenier91,McCann95}). In this case~\eqref{kyfan} is Ky Fan's inequality~\cite[\S~17.9]{CoverThomas06}.

\medskip

The key argument is now the following. Considering escort variables, by transport (Lemma~\ref{gt}), one can write $X_p = T(X^*_p)$ and $Y_q = U(Y^*_q)$
for two diffeomorphims $T$ and $U$ satisfying~\eqref{kyfan}.
Then by transport preservation (Proposition~\ref{transportpreserv}), we have $\lambda\Delta_p(X\|U)+(1-\lambda)\Delta_p(Y\|V)=\lambda\Delta_p(X^*\|U^*)+(1-\lambda)\Delta_p(Y^*\|V^*)$ for any $U\sim \phi$ and $V\sim\psi$, which from~\eqref{transportpreservationidentity2} can be easily rewritten in the form
\begin{multline}\label{someidentity}
-r'\log\E\bigl( \chi^\frac{1}{r'}(X,Y)\bigr) - \lambda h_p(X)-(1-\lambda)h_q(Y)
\\=-r'\log\E\Bigl(\!\bigl( \chi(T(X^*),U(Y^*))
|T'(X^*)|^\lambda |U'(Y^*)|^{1-\lambda}\bigr)^\frac{1}{r'}\!\Bigr)\\
  - \lambda h_p(X^*)-(1-\lambda) h_q(Y^*)
\end{multline}
where we have noted $\chi(x,y)=\phi_p^{\lambda}(x)\psi_q^{1-\lambda}(y)$.
Such an identity holds, by the change of variable $x=T(x^*),y=U(y^*)$, for any function $\chi(x,y)$ of $x$ and $y$.
Now from~\eqref{rentropy} we have
\begin{equation*}
h_r(\sqrt{\lambda}X\!+\! \sqrt{1\!-\!\lambda} Y) = -r'\log\E\bigl( \theta_r^{1/r'}\!\!(\sqrt{\lambda}X\!+\! \sqrt{1\!-\!\lambda} Y)\bigr).
\end{equation*}
where $\theta$ is the density of $\sqrt{\lambda}X\!+\! \sqrt{1\!-\!\lambda} Y$.
Therefore, the l.h.s. of~\eqref{repi1equiv} can be written as
\begin{align}
&h_r(\sqrt{\lambda}X\!+\! \sqrt{1\!-\!\lambda} Y) \!-\! \lambda h_p(X) \!-\! (1\!-\!\lambda) h_q(Y) \\
&=\!-\!r'\log\E\bigl( \theta_r^{\frac{1}{r'}}\!\!(\sqrt{\lambda}X\!+\! \sqrt{1\!-\!\lambda} Y)\bigr) \!-\! \lambda h_p(X) \!-\! (1\!-\!\lambda) h_q(Y)  \notag
\end{align}
Applying~\eqref{someidentity} to $\chi(x,y)=\theta_r(\sqrt{\lambda}x\!+\! \sqrt{1\!-\!\lambda} y)$ and using the inequality~\eqref{kyfan} gives 
\begin{align}\label{ineqqq}
h_r&(\sqrt{\lambda}X\!+\! \sqrt{1\!-\!\lambda} Y) \!-\! \lambda h_p(X) \!-\! (1\!-\!\lambda) h_q(Y)\\
 &\geq -r'\log\E\bigl( 
 \phi^\frac{1}{r'}(X^*,Y^*)
 \bigr) \!-\! \lambda h_p(X^*) \!-\! (1\!-\!\lambda) h_q(Y^*)\notag
\end{align}
where $\phi(x^*,y^*)=\theta_r(\sqrt{\lambda}T(x^*)\!+\! \sqrt{1\!-\!\lambda} U(y^*))\cdot
|\lambda T'(x^*)\!+\!(1\!-\!\lambda)U'(y^*)|$.
To conclude we need the following
\begin{Lemma}[{Normal} Rotation~\cite{Rioul17}]\label{gr}
If $X^*,Y^*$ are i.i.d. Gaussian, then~for any $0<\lambda<1$, the rotation\pagebreak[1]
\begin{equation}
\tilde{X}\!=\!\sqrt{\lambda}\;X^* \!+\! \sqrt{1-\lambda}\;Y^*,
\quad
\tilde{Y}\!=\!-\sqrt{1-\lambda}\;X^* \!+\! \sqrt{\lambda}\;Y^*
\end{equation}
yields i.i.d.\@ Gaussian variables $\tilde{X},\tilde{Y}$. 
\end{Lemma}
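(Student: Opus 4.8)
The plan is to recognize that the pair $(\tilde X,\tilde Y)$ is the image of $(X^*,Y^*)$ under an orthogonal linear map of $\R^{2n}$, and then to invoke the rotational invariance of the standard Gaussian. Concretely, set $Z^*=(X^*,Y^*)\in\R^{2n}$; since $X^*,Y^*$ are i.i.d.\ $\mathcal N(0,\mathbf I)$, we have $Z^*\sim\mathcal N(0,\mathbf I_{2n})$, whose density is proportional to $\exp(-\|z\|^2/2)$ and thus depends on $z$ only through $\|z\|$. The map $(X^*,Y^*)\mapsto(\tilde X,\tilde Y)$ acts coordinatewise by the $2\times 2$ matrix $R=\left(\begin{smallmatrix}\sqrt{\lambda}&\sqrt{1-\lambda}\\-\sqrt{1-\lambda}&\sqrt{\lambda}\end{smallmatrix}\right)$, hence globally it is the block matrix $R\otimes\mathbf I_n$. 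One checks $R^{\top}R=\mathbf I_2$ — this is exactly the normalization $(\sqrt{\lambda})^2+(\sqrt{1-\lambda})^2=1$ built into the definition of the rotation — so $R\otimes\mathbf I_n$ is orthogonal with $|\det|=1$. Therefore $\|(\tilde X,\tilde Y)\|=\|(X^*,Y^*)\|$ and, by the change-of-variables formula with unit Jacobian, $(\tilde X,\tilde Y)$ has the same density as $(X^*,Y^*)$, namely $\mathcal N(0,\mathbf I_{2n})$. Reading off the marginals of this product density, $\tilde X$ and $\tilde Y$ are independent, each $\mathcal N(0,\mathbf I)$.

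Alternatively — and this is the variant I would actually write out, as it avoids even the change-of-variables formula — one can argue purely via second moments. The vectors $\tilde X$ and $\tilde Y$ are linear images of the jointly Gaussian vector $(X^*,Y^*)$, hence jointly Gaussian and zero-mean. A direct computation using $\mathrm{Cov}(X^*)=\mathrm{Cov}(Y^*)=\mathbf I$ together with the independence of $X^*$ and $Y^*$ gives $\mathrm{Cov}(\tilde X)=\lambda\mathbf I+(1-\lambda)\mathbf I=\mathbf I$, $\mathrm{Cov}(\tilde Y)=(1-\lambda)\mathbf I+\lambda\mathbf I=\mathbf I$, and $\mathrm{Cov}(\tilde X,\tilde Y)=-\sqrt{\lambda(1-\lambda)}\,\mathbf I+\sqrt{(1-\lambda)\lambda}\,\mathbf I=0$. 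Since uncorrelated jointly Gaussian random vectors are independent, $\tilde X$ and $\tilde Y$ are independent standard Gaussian, as claimed.

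There is essentially no obstacle here: the only points deserving a word of justification are that $R$ is genuinely orthogonal (equivalently, that the stated coefficients of the ``rotation'' are correctly normalized) and, in the second approach, the standard fact that for a jointly Gaussian pair, vanishing cross-covariance is equivalent to independence. Both are immediate, so the proof occupies only a few lines.
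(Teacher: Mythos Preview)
Your proposal is correct, and your second (covariance) argument is exactly the one the paper indicates: it merely says the lemma ``is easy proved considering covariance matrices,'' which is precisely your computation that $\mathrm{Cov}(\tilde X)=\mathrm{Cov}(\tilde Y)=\mathbf I$ and $\mathrm{Cov}(\tilde X,\tilde Y)=0$, followed by the standard fact that uncorrelated jointly Gaussian vectors are independent. Your first argument via orthogonal invariance of $\mathcal N(0,\mathbf I_{2n})$ is an equally valid alternative and amounts to the same thing packaged differently.
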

Lemma~\ref{gr} is easy proved considering covariance matrices. 
A deeper result (Bernstein's lemma, not used here) states that this property of remaining i.i.d.\@ by rotation characterizes the Gaussian distribution~\cite[Lemma~4]{Rioul17a}~\cite[Chap.~5]{Bryc95}). 

Since the starred variables can be expressed in terms of the tilde variables by the inverse~rotation
$X^*\!=\!\sqrt{\lambda}\;\tilde{X}  - \sqrt{1-\lambda}\;\tilde{Y}$,
$Y^*\!=\!\sqrt{1-\lambda}\;\tilde{X} + \sqrt{\lambda}\;\tilde{Y}$,
inequality~\eqref{ineqqq} can be written as
\begin{align}\label{afterrotation}
&h_r(\sqrt{\lambda}X+\sqrt{1-\lambda}Y) - \lambda h_p(X) -(1-\lambda)h_q(Y)\\
&\;\geq -r' \log \E\bigl(   \psi^{1/r'}(\tilde{X}|\tilde{Y})   \bigr)- \lambda h_p(X^*) -(1-\lambda)h_q(Y^*),\notag
\end{align}
where 
$
\psi(\tilde{x}|\tilde{y}) =
\theta_r(\sqrt{\lambda}T(\sqrt{\lambda}\tilde{x}  \!-\! \sqrt{1\!-\!\lambda}\tilde{y})\!+\! \sqrt{1\!\!-\!\!\lambda} U(\sqrt{1\!-\!\lambda}\tilde{x} + \sqrt{\lambda}\tilde{y}))\cdot
|\lambda T'(\sqrt{\lambda}\tilde{x}  \!-\! \sqrt{1\!-\!\lambda}\tilde{y})\!+\!(1\!-\!\lambda)U'(\sqrt{1\!-\!\lambda}\tilde{x} + \sqrt{\lambda}\tilde{y})|$.
Making the change of variable $z=\sqrt{\lambda}T(\sqrt{\lambda}\tilde{x}-\sqrt{1-\lambda}\tilde{y})+\sqrt{1-\lambda}U(\sqrt{1-\lambda}\tilde{x}+\sqrt{\lambda}\tilde{y})$, we check that
$\int\!\psi(\tilde{x}|\tilde{y}) \d \tilde{x} = \int\!\theta_r(z)\d z =1$
since $\theta_r$ is a density. Hence, $\psi(\tilde{x}|\tilde{y})$ is a conditional density, and by~\eqref{condrgibbsineq},
\begin{equation}\label{condpsi}
-r' \log \E\bigl(   \psi^{1/r'}(\tilde{X}|\tilde{Y})   \bigr) \geq h_r(\tilde{X}|\tilde{Y}) 
\end{equation}
where $h_r(\tilde{X}|\tilde{Y}) = h_r(\tilde{X})=h_r(\sqrt{\lambda}\;X^* + \sqrt{1-\lambda}\;Y^*)$ since $\tilde{X}$ and $\tilde{Y}$ are independent.
Combining with~\eqref{afterrotation} yields the announced inequality~\eqref{repi1equiv}.

It remains to settle the equality case in~\eqref{repi1equiv}. From the above proof, equality holds in~\eqref{repi1equiv} if and only if both~\eqref{kyfan} and~\eqref{condpsi} are equalities. The rest of the argument depends on whether Kn\"othe or Brenier maps are used:\\
\emph{(i) Kn\"othe maps:}
In the case of Kn\"othe maps, Jacobian matrices are triangular and equality in~\eqref{kyfan} holds if and only if for all $i=1,2,\ldots,n$,
$\frac{\partial T_i}{\partial x_i}(X^*) =  \frac{\partial U_i}{\partial y_i}(Y^*) \text{ a.s.}$
Since $X^*$ and $Y^*$ are independent Gaussian variables, this implies that $\frac{\partial T}{\partial x_i}$ and $\frac{\partial U}{\partial y_i}$ are constant and equal. In particular the Jacobian $|\lambda T'(\sqrt{\lambda}\tilde{x}-\sqrt{1-\lambda}\tilde{y}) + (1-\lambda) U'(\sqrt{1-\lambda}\tilde{x}+\sqrt{\lambda}\tilde{y})| $ is~constant. 
Now since $h_r(\tilde{X}|\tilde{Y}) = h_r(\tilde{X})$ equality in~\eqref{condpsi} holds only if $ \psi(\tilde{x}|\tilde{y})$ does not depend on $\tilde{y}$, which implies that $\sqrt{\lambda}T(\sqrt{\lambda}\tilde{x}-\sqrt{1-\lambda}\tilde{y})+\sqrt{1-\lambda}U(\sqrt{1-\lambda}\tilde{x}+\sqrt{\lambda}\tilde{y})$ does not depend on the value of $\tilde{y}$. Taking~derivatives with respect to $y_j$ for all $j=1,2,\ldots,n$, we have
$
  -\sqrt{\lambda}\sqrt{1-\lambda} \frac{\partial T_i}{\partial x_j}(\sqrt{\lambda}\tilde{X}-\sqrt{1-\lambda}\tilde{Y}) + \sqrt{\lambda}\sqrt{1-\lambda} \frac{\partial U_i}{\partial x_j}(\sqrt{1-\lambda}\tilde{X}+\sqrt{\lambda}\tilde{Y})=0
$
which implies 
$\frac{\partial T_i}{\partial x_j}(X^*) =  \frac{\partial U_i}{\partial y_j}(Y^*)$ a.s.
for all $i,j=1,2,\ldots,n$. In other words, $T'(X^*)=U'(Y^*)$ a.s.
\\
\emph{(ii) Brenier maps:}
In the case of Brenier maps the argument is simpler. Jacobian matrices are symmetric positive definite and by strict concavity, Ky Fan's inequality~\eqref{kyfan} is an equality only if $T'(X^*)=U'(Y^*)$ a.s.

\medskip

In both cases, since $X^*$ and $Y^*$ are independent, this implies that $T'(X^*)=U'(Y^*)$ is constant.
Therefore, $T$ and $U$ are linear transformations, equal up to an additive constant ($=0$ since the random vectors are assumed of zero mean). It follows that $X_p = T(X^*_p)$ and $Y_q = U(Y^*_q)$ are Gaussian with respective distributions $X_p\sim \mathcal{N}(0,\mathbf{K}/p)$ and $Y_q\sim \mathcal{N}(0,\mathbf{K}/q)$. Hence, $X$ and $Y$ are i.i.d.\@ Gaussian $\mathcal{N}(0,\mathbf{K})$.
This ends the proof of Theorem~\ref{repi1m}.\hfill\qedsymbol

We note that this section has provided an information-theoretic proof the strengthened Young's convolutional inequality (with optimal constants), since~\eqref{repi1equiv} is a rewriting of this convolutional inequality~\cite{DemboCoverThomas91}.

\section{A Transportation Proof of Theorem~\ref{logconcaveentropyconcave}}
\label{transportationvarentropy}

Define $r=\lambda p + (1-\lambda) q$ where $0<\lambda<1$. 
It is required to show that $(1-r) h_r(X)+n\log r \geq \lambda\bigl((1-p) h_p(X)+n\log p\bigr)+(1-\lambda)\bigl((1-q) h_q(X)+n\log q\bigr)$.

By Lemma~\ref{gt} there exists two diffeomorphisms $T,U$ such that one can write $pX_p=T(X^*)$ and $qX_q=U(X^*)$. Then, by these changes of variables $X^*$ has density
\begin{equation}
\tfrac{1}{p^n} f_p\bigl(\tfrac{T(x^*)}{p}\bigr) |T'(x^*)| = 
\tfrac{1}{q^n} f_q\bigl(\tfrac{U(x^*)}{q}\bigr) |U'(x^*)| 
\end{equation}
which can be written
$$
\frac{ f^p\bigl(\tfrac{T(x^*)}{p}\bigr) |T'(x^*)|}{\exp\bigl((1-p) h_p(X)+n\log p\bigr)} = 
\frac{f^q\bigl(\tfrac{U(x^*)}{q}\bigr) |U'(x^*)|}{\exp\bigl((1-q) h_q(X)+n\log q\bigr)}  
$$
Taking the geometric mean, integrating over $x^*$ and taking the logarithm gives the representation
\begin{multline*}
  \lambda\bigl((1-p) h_p(X)+n\log p\bigr)+(1-\lambda)\bigl((1-q) h_q(X)+n\log q\bigr)
  \\=\! \log \!\int\!   f^{\lambda p}\bigl(\tfrac{T(x^*)}{p}\bigr) 
                       f^{(1\!-\!\lambda)q}\bigl(\tfrac{U(x^*)}{q}\bigr) 
                       |T'(x^*)|^\lambda|U'(x^*)|^{1\!-\!\lambda}\d x^*\!.
\end{multline*}
Now, by log-concavity~\eqref{logconcave} (with $\mu=\lambda p/r$) and~\eqref{kyfan},
\begin{align*}
  &\lambda\bigl((1\!-\!p) h_p(X)+n\log p\bigr)+(1\!-\!\lambda)\bigl((1\!-\!q) h_q(X)+n\log q\bigr)\notag\\
  &\leq \log \smash{\int}   f^r\bigl(\tfrac{\lambda T(x^*)\!+\!(1\!-\!\lambda)U(x^*)}{r}\bigr ) 
                       |\lambda T'(x^*)\!+\!(1\!-\!\lambda)U'(x^*)|\d x^*\\
   &= \log \bigl(r^n\!\int\!\! f^r\bigr) =      (1-r) h_r(X)+n\log r .                
\end{align*}
This ends the proof of Theorem~\ref{logconcaveentropyconcave}.\hfill\qedsymbol

This theorem asserts that the second derivative $\frac{\partial^2}{\partial r^2} \bigl(  (1-r) h_r(X)+n\log r \bigr)\leq 0$. From~\eqref{identityvar}
this gives $\mathrm{Var} \log f(X_r) \leq n/r^2$, that is, $\mathrm{Var} \log f_r(X_r) \leq n$. Setting $r=1$, this is the varentropy bound  $\mathrm{Var} \log f(X) \leq n$ of~\cite{FradeliziMadimanWang16}.





\begin{thebibliography}{10}
\providecommand{\BIBentrySTDinterwordspacing}{\spaceskip=0pt\relax}
\providecommand{\BIBentryALTinterwordstretchfactor}{4}
\providecommand{\BIBentryALTinterwordspacing}{\spaceskip=\fontdimen2\font plus
\BIBentryALTinterwordstretchfactor\fontdimen3\font minus
  \fontdimen4\font\relax}

\bibitem{Shannon48}
C.~E. Shannon, ``A mathematical theory of communication,'' \emph{Bell System
  Technical Journal}, vol.~27, pp. 623--656, Oct. 1948.

\bibitem{Rioul11}
O.~Rioul, ``Information theoretic proofs of entropy power inequalities,''
  \emph{IEEE Trans. Inf. Theory}, vol.~57, no.~1, pp. 33--55, Jan. 2011.

\bibitem{DemboCoverThomas91}
A.~Dembo, T.~M. Cover, and J.~A. Thomas, ``Information theoretic
  inequalities,'' \emph{IEEE Trans. Inf. Theory}, vol.~37, no.~6, 
  pp.\,1501--1518, Nov. 1991.

\bibitem{CoverThomas06}
T.~M. Cover and J.~A. Thomas, \emph{Elements of Information Theory},
  2nd~ed.\hskip 1em plus 0.5em minus 0.4em\relax Wiley, 2006.

\bibitem{MadimanMelbourneXu17}
M.~Madiman, J.~Melbourne, and P.~Xu, ``Forward and reverse entropy power
  inequalities in convex geometry,'' in \emph{Convexity and Concentration},
  ser. IMA Volumes in Mathematics and its Applications, 
  E.~Carlen, M.~Madiman, \& E.~Werner, Eds.\hskip 1em plus 0.5em minus 0.4em\relax 
  Springer, 2017,
  vol. 161, pp. 427--485.

\bibitem{BobkovChistyakov15}
S.~G. Bobkov and G.~P. Chistyakov, ``Entropy power inequality for the {R\'enyi}
  entropy,'' \emph{IEEE Trans. Inf. Theory}, vol.~61, no.~2, 
  pp. 708--714, Feb.
  2015.

\bibitem{RamSason16}
E.~Ram and I.~Sason, ``On {R\'enyi} entropy power inequalities,'' \emph{IEEE
  Trans. Inf. Theory}, vol.~62, no.~12, pp. 6800--6815, Dec. 2016.

\bibitem{BobkovMarsiglietti17}
S.~G. Bobkov and A.~Marsiglietti, ``Variants of the entropy power inequality,''
  \emph{IEEE Trans. Inf. Theory}, vol.\,63, no.~12, 
  pp. 7747--7752, Dec. 2017.

\bibitem{Li18}
J.~Li, ``R{\'e}nyi entropy power inequality and a reverse,'' \emph{Studia
  Mathematica}, vol. 242, pp. 303--319, Feb. 2018.

\bibitem{MarsigliettiMelbourne18}
A.~Marsiglietti and J.~Melbourne, ``On the entropy power inequality for the
  {R\'enyi} entropy of order $[0, 1]$,'' 
  \emph{IEEE Trans. Inf. Theory}, vol.\,65, no.\,3, 
  pp. 1387--1396, Mar. 2019.

\bibitem{Rioul18}
\BIBentryALTinterwordspacing
O.~Rioul, ``R\'enyi entropy power inequalities via normal transport and
  rotation,'' \emph{Entropy},
   vol.~20, no.~9, p. 641, Sep. 2018. 

\bibitem{Rioul17}
------, ``Yet another proof of the entropy power inequality,'' \emph{IEEE
  Trans. Inf. Theory}, vol.~63, no.~6, pp. 3595--3599, Jun. 2017.

\bibitem{FradeliziMadimanWang16}
M.~Fradelizi, M.~Madiman, and L.~Wang, ``Optimal concentration of information
  content for log-concave densities,'' in \emph{High Dimensional Probability
  VII: The Carg{\`e}se Volume}, 
  Basel: Birkh{\"a}user, 2016.

\bibitem{Bercher09}
J.-F. Bercher, ``Source coding with escort distributions and {R\'enyi} entropy
  bounds,'' \emph{Physics Letters A}, vol. 373, no.~36, pp. 3235--3238, Aug.
  2009.

\bibitem{LapidothPfister16}
A.~Lapidoth and C.~Pfister, ``Two measures of dependence,'' in \emph{IEEE
  Int. Conf. Science Electrical Engineering (ICSEE
  2016)}, 2016.

\bibitem{vanErvenHarremoes14}
T.~van Erven and P.~Harremo\"es, ``R\'enyi and {Kullback-Leibler}
  divergence,'' \emph{IEEE Trans. Inf. Theory}, vol.\,60, 
  no.\,7, 
  pp. 3797--3820,
  Jul. 2014.

\bibitem{Verdu15}
S.~Verd\'u, ``$\alpha$-mutual information,'' in \emph{Information Theory and
  Applications Workshop (ITA 2015)}, Feb. 2015.

\bibitem{FehrBerens14}
S.~Fehr and S.~Berens, ``On the conditional {R\'enyi} entropy,'' \emph{IEEE
  Trans. Inf. Theory}, vol.~60, no.~11, pp. 6801--6810, Nov. 2014.

\bibitem{Rioul17a}
O.~Rioul, ``Optimal transportation to the entropy-power inequality,'' in
  \emph{IEEE Inf. Theory Applications Workshop (ITA 2017)}, Feb.
  2017.

\bibitem{Rosenblatt52}
M.~Rosenblatt, ``Remarks on a multivariate transformation,'' \emph{Ann. Math.
  Stat.}, vol.~23, no.~3, pp. 470--472, 1952.

\bibitem{Knothe57}
H.~Kn\"othe, ``Contributions to the theory of convex bodies,'' \emph{Michigan
  Math. J.}, vol.~4, pp. 39--52, 1957.

\bibitem{Brenier91}
Y.~Brenier, ``Polar factorization and monotone rearrangement of vector-valued
  functions,'' \emph{Comm. Pure Applied Math.}, vol.~44, no.~4, pp. 375--417,
  Jun. 1991.

\bibitem{McCann95}
R.~J. McCann, ``Existence and uniqueness of monotone measure-preserving maps,''
  \emph{Duke Math. J.}, vol.\,80 
  pp.\,309--324, Nov. 1995.

\bibitem{Bryc95}
W.~Bryc, \emph{The Normal Distribution - Characterizations with Applications},
  ser. Lecture Notes in Statistics.\hskip 1em plus 0.5em minus 0.4em\relax
  Springer, 1995, vol. 100.

\end{thebibliography}

\vspace*{0.001pt}

%
%

\end{document}